\numberwithin{equation}{section}
\numberwithin{figure}{section}
\theoremstyle{plain}
\newtheorem{thm}{\protect\theoremname}[section]
\theoremstyle{definition}
\newtheorem{defn}[thm]{\protect\definitionname}
\theoremstyle{plain}
\newtheorem{lem}[thm]{\protect\lemmaname}
\theoremstyle{plain}
\newtheorem{prop}[thm]{\protect\propositionname}
\theoremstyle{plain}
\newtheorem{cor}[thm]{\protect\corollaryname}
  \newcommand\fs@myRoundBox{\def\@fs@cfont{\bfseries}\let\@fs@capt\floatc@plain
  \def\@fs@pre{\begin{mdframed}[style=myFigureBoxStyle]}%
  \def\@fs@mid{\vspace{\abovecaptionskip}}%
  \def\@fs@post{\end{mdframed}}\let\@fs@iftopcapt\iffalse}
\date{}
\setlist{nosep}
\providecommand{\corollaryname}{Corollary}
\providecommand{\definitionname}{Definition}
\providecommand{\lemmaname}{Lemma}
\providecommand{\propositionname}{Proposition}
\providecommand{\theoremname}{Theorem}
\begin{document}
\global\long\def\l{\ell}

\global\long\def\N{\mathbb{N}}

\global\long\def\Z{\mathbb{Z}}

\global\long\def\R{\mathbb{R}}

\global\long\def\C{\mathbb{C}}

\global\long\def\defi{\triangleq}

\global\long\def\s{\subseteq}

\global\long\def\w{\bar{w}}

\global\long\def\A{\mathcal{A}}

\global\long\def\F{\mathcal{F}}

\global\long\def\J{\mathcal{J}}

\global\long\def\W{\mathscr{W}}

\global\long\def\ttop{\text{top}}

\title{Improved Online Algorithm for Weighted Flow Time}

\author{Yossi Azar\\azar@tau.ac.il\\Tel Aviv University\thanks{Partially supported by the ISF grant no. 1506/16 and by the ICRC Blavatnik Fund.} \and Noam Touitou\\noamtouitou@mail.tau.ac.il\\Tel Aviv University}
\maketitle
\begin{abstract}
We discuss one of the most fundamental scheduling problem of processing
jobs on a single machine to minimize the weighted flow time (weighted
response time). Our main result is a $O(\log P)$-competitive algorithm,
where $P$ is the maximum-to-minimum processing time ratio, improving
upon the $O(\log^{2}P)$-competitive algorithm of Chekuri, Khanna
and Zhu (STOC 2001). We also design a $O(\log D)$-competitive algorithm,
where $D$ is the maximum-to-minimum density ratio of jobs. Finally,
we show how to combine these results with the result of Bansal and
Dhamdhere (SODA 2003) to achieve a $O(\log(\min(P,D,W)))$-competitive
algorithm (where $W$ is the maximum-to-minimum weight ratio), without
knowing $P,D,W$ in advance. As shown by Bansal and Chan (SODA 2009),
no constant-competitive algorithm is achievable for this problem.\vfill
\end{abstract}
\pagebreak{}

\section{Introduction}

We discuss the fundamental problem of online scheduling of jobs on
a single machine. In this problem, a machine receives jobs over time,
and the online algorithm decides which job to process at any point
in time, allowing preemption as needed. Each of the jobs has a processing
time (or volume) and a weight. We consider the weighted flow time
cost function for this problem, in which the algorithm has to minimize
the weighted sum over the jobs of the time between their arrival and
their completion.

Define $P$ as the maximal ratio between the processing times of any
two jobs, and $W$ as the maximal ratio between the weights of any
two jobs. The first non-trivial algorithm for this problem was shown
by Chekuri et al. \cite{DBLP:conf/stoc/ChekuriKZ01}, and is $O(\log^{2}P)$-competitive.
Bansal and Dhamdhere \cite{DBLP:journals/talg/BansalD07} showed a
$O(\log W)$-competitive algorithm for this problem. A lower bound
on competitiveness of \\
$\Omega(\min((\log W/\log\log W)^{\frac{1}{2}},(\log\log P/\log\log\log P)^{\frac{1}{2}}))$
was then shown by Bansal and Chan \cite{DBLP:conf/soda/BansalC09}.

An additional parameter of interest is $D$, which is the maximal
ratio between the densities of any two jobs. This parameter has not
been explored in any previous work, though the algorithm in \cite{DBLP:conf/stoc/ChekuriKZ01}
can be modified quite easily to yield a $O(\log^{2}D)$-competitive
algorithm.

\subsection*{Our Results}

Our main results are as follows:
\begin{itemize}
\item $O(\log P)$-competitive algorithm for weighted flow time on a single
machine, improving upon the previous result of $O(\log^{2}P)$ in
\cite{DBLP:conf/stoc/ChekuriKZ01}. 
\item $O(\log D)$-competitive algorithm. This algorithm is different from
the $O(\log P)$-competitive algorithm. 
\item A combined algorithm which is $O(\log(\min(P,D,W)))$ competitive,
without knowing $P,D,W$ in advance. This builds on the previous two
algorithms and the algorithm of \cite{DBLP:journals/talg/BansalD07}.
\end{itemize}

\subsection*{Related Work}

As mentioned above, for minimizing weighted flow time on a single
machine, Chekuri et al. \cite{DBLP:conf/stoc/ChekuriKZ01} presented
a $O(\log^{2}P)$-competitive algorithm. The algorithm arranges all
jobs in a table by their weight and density. The algorithm only processes
jobs the weight of which is larger than the sum of the weights in
the rectangle of lower density and lower weight jobs. In \cite{DBLP:journals/talg/BansalD07},
a $O(\log W)$-competitive algorithm is presented for this problem.
This algorithm is based on the division of jobs into bins according
to the logarithmic class of their weight, and then balancing the weights
of those bins. A lower bound on algorithm competitiveness of $\Omega(\min((\log W/\log\log W)^{\frac{1}{2}},(\log\log P/\log\log\log P)^{\frac{1}{2}}))$
was shown in \cite{DBLP:conf/soda/BansalC09}. Typically, one assumes
that $P$ and $W$ are bounded and that the number of jobs may be
arbitrarily large (unbounded). In the case that the number of jobs
is small, it is shown in \cite{DBLP:journals/talg/BansalD07} how
to use the $O(\log W)$ competitive algorithm to construct a $O(\log P+\log n)$-competitive
algorithm, with $n$ being the number of jobs released. Note that
the competitive ratio of this algorithm is unbounded as a function
of $P$. For the offline problem, Chekuri and Khanna \cite{Chekuri:2002:ASP:509907.509954}
showed a quasi-polynomial time approximation scheme. Bansal \cite{Bansal:2005:MFT:2308916.2309425}
later extended this result to a constant number of machines.

For a single machine in the unweighted case, a classic result from
\cite{NAV:NAV3800030106} states that the shortest remaining processing
time algorithm is optimal. For the case of $m$ identical machines,
and a set of $n$ jobs, Leonardi and Raz \cite{Leonardi:1997:ATF:258533.258562}
showed that SRPT is $O(\log(\min(P,\frac{n}{m})))$-competitive, and
that this is optimal. Awerbuch et al. \cite{doi:10.1137/S009753970037446X}
presented a $O(\log(\min(P,n)))$-competitive algorithm in which jobs
do not migrate machines. An algorithm with immediate dispatching (and
no migration) with similar guarantees was presented in \cite{Avrahami:2003:MTF:777412.777415}.
In contrast to those positive results, for the case of weighted flow
time on multiple machines, Chekuri et al. \cite{DBLP:conf/stoc/ChekuriKZ01}
showed a $\Omega(\min(\sqrt{P},\sqrt{W},(\frac{n}{m})^{\frac{1}{4}}))$
lower bound on competitiveness for $m>1$ machines, making the problem
intractable.

Competitive algorithms also exist for unweighted flow time on related
machines \cite{DBLP:conf/icalp/GargK06,DBLP:conf/stoc/GargK06}. For
the case of unweighted flow time for machines with restricted assignment,
no online algorithm with a bounded competitive ratio exists \cite{DBLP:conf/focs/GargK07}. 

In the resource augmentation problem, the single machine weighted
flow time problem becomes significally easier. In \cite{BECCHETTI2006339},
a $(1+\frac{1}{\epsilon})$-competitive algorithm is given for a $(1+\epsilon)$-speed
model. An algorithm with a similar guarantee was given in \cite{DBLP:journals/talg/BansalD07}
for the non-clairvoyant setting. A competitive algorithm is also known
for weighted flow time in unrelated machines \cite{DBLP:conf/stoc/ChadhaGKM09}.
Additional work has been done on weighted flow time for unrelated
machines with resource augmentation in the offline model (see e.g.
\cite{DBLP:conf/stoc/ChekuriGKK04,doi:10.1137/1.9781611973099.97}).

\subsection*{Our Technique}

\textbf{Processing-time ratio algorithm.} Our $O(\log P)$ algorithm
rounds the weights of incoming jobs to a power of $2$, and assigns
them into power-of-$2$ bins according to their processing time. That
is, bin $i$ contains jobs whose initial processing time is in the
range $[2^{i},2^{i+1})$. Note that jobs never move between bins. 

To describe the algorithm, we utilize a visualization of the algorithms
state, as shown in figure \ref{fig:Processing-Time-Bins}. We view
each job as a rectangular container, such that the height of the rectangle
is the weight of the job, and its area is $2^{i+1}$. The volume of
the job can be viewed as liquid within that container. When a job
is being processed, the amount of liquid is reduced from the top.
A horizontal dotted line runs through the middle of the container.
This line helps to observe the volume covered by a bar, a concept
used in the algorithm's analysis.

Inside each bin, jobs of higher weight have a higher priority for
processing. For jobs of the same weight, the jobs are ordered according
to their density, such that lower density jobs get priority. This
priority is illustrated by the rectangular containers of the jobs
being stacked on top of one another, such that a higher job in the
stack has higher priority. 

At any point in time, the algorithm chooses a bin from which to process
the uppermost job. At each point in time, each bin is assigned a score,
such that the bin with the highest score is processed. In \cite{DBLP:journals/talg/BansalD07},
the score assigned to each bin is the total weight of the jobs in
that bin. In our algorithm, the score is more complex. All jobs except
the top job add their weight to the total score of the bin. The top
job adds to the score of the bin either its complete weight, if it
has high remaining processing time, or half of it, if it has low remaining
processing time. While this modification seems odd, it is crucial
for the proof. This can lead to some interesting behavior: for example,
a job can be preempted during processing because its processing time
has decreased below the threshold, lowering the score of its bin.
This preemption is not due to any external event; no job has been
released to trigger it.

The outline of our proof is inspired by \cite{DBLP:journals/talg/BansalD07}.
In our proof, we use the concept of volume covered by a bar. We place
a horizontal bar at some height $x$, as shown in figure \ref{fig:Processing-Time-Bins}.
If the bar lies over the top of a job, we say that it covers the entire
volume of the job. If the bar lies between the half of the job (the
dotted lines in figure \ref{fig:Processing-Time-Bins}) and the top
of the job, it only covers the volume underneath the half of the job.
Otherwise, the bar covers none of the job's volume.

The proof consists of showing that at any time, a bar at a height
that is constant times the remaining weight of the optimum, every
job in the algorithm covers the entire volume in the algorithm. We
then show that such a bar can be raised by a multiplicative factor
of $2$ to yield a second bar, which covers all weight in all bins
(and therefore, its height is larger than the total weight in that
bin). This gives us the $O(\log P)$ competitiveness.

\textbf{Density ratio algorithm. }A different algorithm, though similar
in the method of its proof, is used for $O(\log D)$ competitiveness.
We assign the jobs into power-of-$2$ bins according to density. In
this algorithm, the weights of the algorithm are not rounded to a
power of $2$. Instead, the densities of the jobs are rounded to a
power of $2$ (this is also through changing the weights). This gives
us that bin $i$ contains jobs whose initial density is $2^{i}$.
In this algorithm, as in the previous algorithm, jobs never move between
bins. 

A visualization of a possible state of the algorithm at a time $t$
is shown in figure \ref{fig:Density-Bins}. We again have a rectangular
container for each job, the height of which is the weight of the job.
In this case, the container arrives full\textendash its width is $2^{i}$,
and thus its area is exactly the volume of the job upon its arrival.
As in the $O(\log P)$ algorithm, when a job is processed its container
depletes from the top. Inside each bin, one job takes priority over
another if its weight is of a higher power of $2$ (higher weight
class). For jobs of the same weight class, a partially processed job
takes priority. 

In this algorithm, each bin has a single dummy job of each weight
class. To calculate the score of a bin, we observe its top (non-dummy)
job. The jobs below that top job\textendash dummy or not\textendash add
their weight to the score. The top job, however, adds to the score
only a fraction of its weight, which is the fraction of its volume
that has not yet been processed. Thus, the score of a bin depends
continuously on the processing state of the top job, as opposed to
the two discrete options in the $O(\log P)$ algorithm.

This score again stems from a definition for the volume covered by
a bar. In this case, the definition is continuous\textendash the volume
covered by the bar is exactly the volume that appears under the bar
in a visualization such as figure \ref{fig:Density-Bins}. 

\textbf{Combined algorithm. }Finally, we use the fact that the $O(\log W)$
algorithm of \cite{DBLP:journals/talg/BansalD07} and the $O(\log P)$
and $O(\log D)$ algorithms have some common properties, and show
how to combine them into a $O(\log(\min(P,D,W)))$ competitive algorithm
without knowledge of $P,D,W$ in advance. This relies on bins of three
types coexisting, and being assigned jobs in a manner that prefers
bins that have already been in use. We note that this method works
by using the specific common properties of the different algorithms,
and cannot be applied to general algorithms for these three parameters.

\section{Preliminaries}

\subsection{The Model}

We are given a single machine, and jobs arrive over time. The machine
can choose which job to process at any given time, with the option
to preempt a previous job if necessary.

An instance in the model is a set of jobs $\mathcal{J}$, so that
every job $J\in\mathcal{J}$ of index $I(J)\in\N$ has the following
attributes:
\begin{itemize}
\item A processing time $p(J)>0$ (also called the volume of $J$).
\item A weight $w(J)>0$, which represents the significance of the job.
\item A time of release $r(J)\ge0$. The job $J$ cannot be processed prior
to this time. 
\end{itemize}
Let the density of a job $J$ be $d(J)=\frac{p(J)}{w(J)}$. We consider
the online clairvoyant model, in which an algorithm is not aware of
the existence of a job $J$ of index $I(J)$ at time $t<r(J)$, but
knows $p(J),w(J)$ once $t\ge r(J)$. 

For a given algorithm and an instance $\mathcal{J}$, denote by $c(J)$
the completion time of a job $J\in\mathcal{J}$ in the algorithm.
The goal of the algorithm is to minimize weighted flow time, defined
as:
\[
\text{\ensuremath{\mathscr{C}}}(\mathcal{J})=\sum_{J\in\mathcal{J}}w(J)\cdot(c(J)-r(J))
\]

Note that when all $w(J)=1$ this reduces to the flow time of the
algorithm, which is the term $\sum_{J\in\J}(c(J)-r(J))$.

For a specific instance $\J$, we define the parameters $P=\max_{J_{1},J_{2}\in\mathcal{J}}\frac{p(J_{1})}{p(J_{2})}$,$W=\max_{J_{1},J_{2}\in\mathcal{J}}\frac{w(J_{1})}{w(J_{2})}$
and $D=\max_{J_{1},J_{2}\in\mathcal{J}}\frac{d(J_{1})}{d(J_{2})}$.
Our algorithms do not require knowledge of these parameters in advance.

\subsection{Common Definitions}

All algorithms we propose consist of two parts. The first part assigns
every job to a bin upon release. The second part decides which job
to process at any given time, based on the contents of those bins.

Denoting by $\A$ the set of bins used by our algorithm, we use the
following definitions.

\begin{defn}
Define the following:
\begin{itemize}
\item For every bin $A\in\mathcal{A}$ and a point in time $t$, define
$A(t)$ as the set of jobs alive in the algorithm in $A$ at $t$
(i.e. jobs that have arrived but have not yet been completed).
\item Denote by $Q(t)=\bigcup_{A\in\A}A(t)$ the set of all jobs alive at
time $t$ in the algorithm. 
\item For a set of jobs $S$, define $w(S)=\sum_{J\in S}w(J)$. Define $W(t)=w(Q(t))$
\item Define $p_{t}(J)$ to be the remaining processing time of $J$ (also
called the remaining volume of $J$) at time $t$.
\item For a set of jobs $S$ alive at time $t$, define $p_{t}(S)=\sum_{J\in S}p_{t}(J).$
Define $V(t)=p_{t}(Q(t))$. 
\end{itemize}
\end{defn}

When considering an optimal algorithm, we refer to its attributes
by adding an asterisk to the aforementioned notation. For example,
$W^{*}(t)$ is the living weight in the optimal algorithm at time
$t$, and $p_{t}^{*}(J)$ is the remaining processing time of job
$J$ in the optimal algorithm  at time $t$.

Our proofs use the following observation to show competitiveness.

\theoremstyle{theorem}
\newtheorem{obs}[thm]{Observation}

\begin{obs}\label{obs:LocalCompImpliesGlobal}

If $W(t)\le c\cdot W^{*}(t)$ for all $t$ then the algorithm is $c$-competitive.
This is due to the fact that the weighted flow time of the algorithm
can be expressed as $\int_{0}^{\infty}W(t)dt$.

 \end{obs}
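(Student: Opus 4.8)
The plan is to first verify the integral representation $\mathscr{C}(\J)=\int_{0}^{\infty}W(t)\,dt$ asserted in the observation, then observe that the identical reasoning applies to the optimal schedule, and finally integrate the pointwise hypothesis $W(t)\le c\cdot W^{*}(t)$.

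For the representation: for each job $J\in\J$ the flow time contributed is $w(J)\cdot(c(J)-r(J))$, and since $J$ is alive in the algorithm precisely on the time interval between $r(J)$ and $c(J)$, we have $c(J)-r(J)=\int_{0}^{\infty}\mathbf{1}\{J\in Q(t)\}\,dt$. Multiplying by $w(J)$, summing over $J\in\J$, and exchanging the sum with the integral --- legitimate because every term is nonnegative (Tonelli) --- gives
\[
\mathscr{C}(\J)=\sum_{J\in\J}w(J)\int_{0}^{\infty}\mathbf{1}\{J\in Q(t)\}\,dt=\int_{0}^{\infty}\sum_{J\in Q(t)}w(J)\,dt=\int_{0}^{\infty}w(Q(t))\,dt=\int_{0}^{\infty}W(t)\,dt.
\]
If $\J$ is infinite one also remarks that only finitely many jobs are released by any finite time, so $W(t)<\infty$ pointwise; otherwise both sides are $+\infty$ and nothing is lost. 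Running the same computation for the optimal algorithm, whose alive set at time $t$ is $Q^{*}(t)$ and whose living weight is $W^{*}(t)$, yields $\mathscr{C}^{*}(\J)=\int_{0}^{\infty}W^{*}(t)\,dt$.

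Now the hypothesis $W(t)\le c\cdot W^{*}(t)$ for every $t\ge0$, together with monotonicity of the integral, immediately gives $\mathscr{C}(\J)=\int_{0}^{\infty}W(t)\,dt\le c\int_{0}^{\infty}W^{*}(t)\,dt=c\cdot\mathscr{C}^{*}(\J)$; since $\J$ is an arbitrary instance, the algorithm is $c$-competitive. There is no genuinely hard step here: the only points meriting a sentence of care are the interchange of the (possibly infinite) sum and the integral, handled by nonnegativity, and the irrelevant ambiguity about whether a job counts as alive at the exact instants $t=r(J)$ and $t=c(J)$, which affects the integrand only on a set of measure zero.
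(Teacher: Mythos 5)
Your proof is correct and is exactly the argument the paper has in mind: the observation's one-line justification ("weighted flow time $=\int_0^\infty W(t)\,dt$") is precisely what you verify via Tonelli, and the rest is integrating the pointwise inequality. You have simply spelled out the standard details that the paper leaves implicit.
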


\section{\label{sec:ProcessingTimeAlgo}The $O(\log P)$-Competitive Algorithm}

In this section, we present a $O(\log P)$-competitive algorithm for
weighted flow time. The algorithm consists of two parts. The first
part, assigns incoming jobs immediately to a bin from the set $\A=\{A_{i}\,|\,i\in\Z\}$.
The second part processes jobs from the bins.

The algorithm \textbf{assumes wlog }that every job $J$ arrives with
a weight that is an integral power of $2$. The algorithm can enforce
this by rounding the weight of every job $J$ to $2^{\lg(w(J))+1}$,
which adds a factor of $2$ to its competitive ratio.

Within each bin, we have an ordering between jobs that determines
their priority. 
\begin{defn}
At a time $t$ and bin $A$:
\begin{itemize}
\item For $J_{1},J_{2}$ in $A(t)$, we write $J_{1}\prec_{t}J_{2}$ when
$w(J_{2})>w(J_{1})$ or when $w(J_{2})=w(J_{1})$ and $p_{t}(J_{1})>p_{t}(J_{2})$.
If $w(J_{2})=w(J_{1})$ and $p_{t}(J_{1})=p_{t}(J_{2})$, we break
ties according to the indices of the jobs, $I(J_{1})$ and $I(J_{2})$
(i.e. arbitrarily).
\item We denote by $\ttop_{A}(t)$ the maximal job with respect to $\prec_{t}$.
\end{itemize}
\end{defn}

\begin{defn}
For a bin $A_{i}\in\A$ and a time $t$:
\begin{itemize}
\item For a job $J$ assigned to $A_{i}$, we say that $J$ is \emph{well-processed}
if $p_{t}(J)\le2^{i}$. Denote by $\delta(J)$ the indicator variable
for being well-processed.
\item Define the \emph{score} of bin $A_{i}$ to be $\W_{A_{i}}(t)=w(A_{i}(t))-\delta(\ttop_{A_{i}}(t))\cdot\frac{w(\ttop_{A_{i}}(t))}{2}$.
\end{itemize}
\end{defn}

Algorithm \ref{alg:PAlgo} as described below is $O(\log P)$ competitive.

\SetAlgoNoEnd

\begin{algorithm}
\caption{\label{alg:PAlgo}$O(\log P)$ Competitive}

Whenever a new job $J$ arrives:\\
\Indp
assign $J$ to $A_{i}$ such that $2^i<p(J)\le 2^{i+1}$.\\
\Indm
At any time $t$:\\
\Indp
For $A=\arg \max _A(\mathscr{W}_A(t))$, process $\text{top}_A(t)$.
\end{algorithm}

\section{\label{sec:AnalysisPT}Analysis of $O(\log P)$-Competitive Algorithm}

Consider a visualization such as figure \ref{fig:Processing-Time-Bins},
of the state of the algorithm at a time $t$. The \emph{base }of a
job $J\in A_{i}(t)$ (denoted $\beta(J,t)$) is the height of its
bottom in the visualization. Formally, $\beta(J,t)=w(\{J^{\prime}\in A_{i}(t)\,|\,J^{\prime}\prec_{t}J\})$. 

Consider a horizontal bar at height $x$. We now define \emph{the
volume covered by $x$}, or \emph{the volume under $x$. }This is:
\begin{itemize}
\item All the volume of jobs completely under $x$.
\item None of the volume of jobs completely over $x$.
\item Some of the volume of jobs that intersect $x$.
\end{itemize}
Formally:
\begin{defn}
At any time $t$, and a bin $A_{i}\in\A$:
\begin{itemize}
\item For any job $J\in A_{i}(t)$ and a non-negative $x$, the \emph{volume
of $J$ under }$x$ is: 
\[
\gamma_{J}(x,t)=\begin{cases}
p_{t}(J) & x\ge\beta(J,t)+w(J)\\
0 & x<\beta(J,t)+\frac{w(J)}{2}\\
\min(p_{t}(J),2^{i}) & \beta(J,t)+\frac{w(J)}{2}\le x<\beta(J,t)+w(J)
\end{cases}
\]
\item The \emph{volume of bin $A_{i}$ under $x$ }is:
\[
B_{A_{i}}(x,t)=\sum_{J\in A_{i}(t)}\gamma_{J}(x,t)
\]
\\
\item The \emph{total volume under $x$ }is\emph{:}
\[
B(x,t)=\sum_{A_{i}\in\A}B_{A_{i}}(x,t)
\]
\end{itemize}
\end{defn}

The previous definition yields the following observation. 

\begin{obs}\label{obs:HeightIsMinimalCover}

$\W_{A_{i}}(t)$ as defined before is the minimal bar that covers
the entire volume of a bin $A_{i}$.

 \end{obs}

In this section, we prove the following theorem.
\begin{thm}
\label{thm:PTCompetitive}The algorithm described in Section \ref{sec:ProcessingTimeAlgo}
is $O(\log P)$-competitive.
\end{thm}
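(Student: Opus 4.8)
The plan is to follow the roadmap sketched in the introduction: reduce competitiveness to a pointwise comparison $W(t) \le O(\log P)\cdot W^*(t)$ via Observation \ref{obs:LocalCompImpliesGlobal}, and establish that comparison by carefully choosing the contribution functions $\mu_A$ and tracking a ``bar'' at a height proportional to $W^*(t)$. Concretely, I would first pin down the contribution function: for bin $A=A_i$, set $\mu_A(x,w,p,h)$ to be $0$ when $x-h < w/2$, to be $\min(p, \text{(stuff)})$ interpolating so that a job whose bar lies above its dotted line (height $h+w/2$) contributes everything below that line, and everything once the bar clears the top of the container at height $h+w$. One must verify this satisfies all six axioms of Definition \ref{def:ContributionFunction} — this is routine but needs the ``puffy'' monotonicity property \ref{enu:Contribution-PuffyDoesntChange} and right-continuity \ref{enu:Contribution-RightContinuous} to be checked against the piecewise definition. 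The key design point, matching the informal description, is that $\hat{\W}_A(t)$ — the smallest bar height covering the top job's remaining volume — equals exactly $\W_A(t)$, the score the algorithm uses; this is what links the greedy processing rule to the geometry.

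Next I would prove the central invariant: at every time $t$, placing a bar at height $h^*(t) := c\cdot W^*(t)$ for a suitable constant $c$, \emph{every} job alive in the algorithm contributes its entire remaining processing time, i.e. $B(h^*(t),t) = V(t)$. The proof is by a potential/continuity argument over time. One checks that the invariant holds vacuously at time $0$; that it is preserved at job arrivals (a new job raises both sides in a controlled way, and raises $W^*$); and that between events it is preserved under processing. For the processing step, the algorithm always works on $M_A(t)$ for the bin $A$ of maximum score $\W_A(t) = \hat{\W}_A(t)$; the crux is that if some bin $A'$ were ``uncovered'' (its top job not fully contributing to the bar at height $h^*$), then $\hat{\W}_{A'}(t) > h^*(t)$, so $A'$ has score exceeding $h^*(t) \ge \Omega(W^*(t))$, which (summing over the chain of jobs below the top in $A'$, using the $\prec_t$ order and the rounding of weights to powers of $2$) forces $W^*(t)$ to be large enough that the optimum cannot have disposed of that volume — a contradiction via a charging argument against $V^*(t)$ or $W^*(t)$. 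This dovetailing between ``the bar is high'' and ``the score is high'' and ``OPT must still be carrying weight'' is the heart of the argument.

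Finally, with the invariant in hand, I would show that the bar at height $h^*(t)$ can be doubled to a bar at height $2h^*(t)$ that \emph{covers all the weight} in every bin: because each job contributes its full processing time to the bar at $h^*$, and because of the dotted-line/half-weight mechanism (property \ref{enu:Contribution-MinimalWeight}: a job only starts contributing once the bar passes $h + w/2$), the stack of containers in bin $A$ below height $h^*$ accounts for at least half the total weight $w(A(t))$ of that bin; doubling to $2h^*$ then exceeds $w(A(t))$. Summing over bins and using that there are only $O(\log P)$ non-empty bins (the bin indices range over $\lg(p(J))$, which spans an interval of length $O(\log P)$ since all alive jobs have processing times within a factor $P$ — here one must be slightly careful since remaining processing times can only be smaller than original, but original processing times of alive jobs are within a factor $P$, so at most $\lceil \log_2 P\rceil + O(1)$ bins are ever simultaneously non-empty) gives $W(t) = \sum_A w(A(t)) \le \sum_A 2h^*(t) = O(\log P)\cdot W^*(t)$, and then Observation \ref{obs:LocalCompImpliesGlobal} finishes it, with an extra factor $2$ folded in for the weight-rounding step.

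The step I expect to be the main obstacle is the preservation of the covering invariant under the algorithm's processing decisions — in particular handling the ``self-triggered'' preemptions the introduction warns about, where a job's score drops because its remaining volume crosses the $2^i$ threshold and $\delta_i$ flips. One has to argue that at the instant of such a flip the bar height $h^*$ still covers everything, which is exactly why the score uses $\W_A(t) = w(A(t)) - \delta_i(M_A(t))w(M_A(t))/2$ and why the contribution function has the dotted-line at the half-height: the half-weight bookkeeping on both sides (the score and the geometric height $\hat{\W}_A$) must stay in lockstep through the discontinuity, and verifying that the greedy ``process the max-score bin'' rule never lets a competing bin's height outrun $h^*(t)$ requires the charging argument against OPT to be tight up to the constant $c$.
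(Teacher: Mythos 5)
Your high-level plan matches the paper's: set $\mu_A = \mu_i$ so that $\hat{\W}_A = \W_A$, establish that a bar at height $O(W^*(t))$ covers all remaining volume in the algorithm's bins, and double the bar to dominate $w(A(t))$ per bin, then sum over $O(\log P)$ bins. The doubling step and the bin-count bound are essentially verbatim what the paper does (Lemma \ref{lem:GoodBinsImplyCompetitive} and the final paragraph). However, there is a genuine gap in the middle step — the one you yourself flag as the expected obstacle — and it is not the one you anticipate.

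You propose to maintain the invariant $B\bigl(c\cdot W^{*}(t),t\bigr)=V(t)$ (or $\ge V(t)$) directly by a continuity/induction argument over time, checking it at job arrivals and under processing. The difficulty is that $W^{*}(t)$ is \emph{not monotone}: every time OPT completes a job, $W^{*}(t)$ drops discontinuously, so the bar height $h^{*}(t)=c\cdot W^{*}(t)$ drops, while $B(\cdot,t)$ and $V(t)$ are untouched at that instant. Your induction has no case for OPT completions, and the invariant as stated simply fails to propagate across them. The paper circumvents this by never invoking $W^{*}(t')$ for intermediate $t'$ at all. Lemma \ref{lem:TrackingLemma-2} fixes the \emph{target} time $t$ in advance, and for $t'\le t$ tracks the quantity
\[
p_{t'}^{*}\bigl(Q_t^{*}(t')\bigr)\;\le\;B\bigl(c\cdot w(Q_t^{*}(t')),\,t'\bigr),
\]
where $Q_t^{*}(t')=Q^{*}(t)\cap Q^{*}(t')$. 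Because a job alive in OPT at both $t'$ and $t$ is alive at every time between, the set $Q_t^{*}(t')$ (and hence the bar height $c\cdot w(Q_t^{*}(t'))$) is \emph{non-decreasing} in $t'$, which is what makes the induction go through: it is unaffected by OPT's completions, only grows at arrivals, and at the endpoint $t'=t$ specializes to $Q^{*}(t)$ giving exactly the inequality $B(c\cdot W^{*}(t),t)\ge V^{*}(t)=V(t)$. This ``double time index'' device is the key technical idea missing from your proposal.

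A secondary, smaller concern: your sketch of the processing step (``if some bin were uncovered, its score exceeds $h^{*}$, so OPT must still be carrying weight — contradiction'') does not match how the argument actually works. There is no contradiction. Proposition \ref{prop:ExecutionMeansWholeVolume} shows directly that whenever processing would decrease $B(x,t)$ for some fixed $x$, the processed job already contributes fully and, by the greedy rule $\hat{\W}_A(t)\ge\hat{\W}_{A'}(t)$, every other bin also contributes fully, i.e. $B(x,t)=V(t)$; meanwhile the tracked OPT volume is non-increasing in the same interval, so the inequality survives. Similarly, your treatment of arrivals (``a new job raises both sides in a controlled way'') elides what is actually the bulk of the work: proving the two arrival properties of $c$-goodness (Lemmas \ref{lem:NondecreasingArrival-2} and \ref{lem:ImprovingArrival-1}), which requires the structural Lemma \ref{lem:ShortJobs-2}/Corollary \ref{cor:ShortJobs-2} about short jobs per weight class, and which is precisely where the peculiar $\delta_i$/half-weight scoring and the choice $\mu_i(\cdot)=\min(2^i,p)$ in the middle regime pay off.
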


\begin{figure}[!t]
\caption{\label{fig:Processing-Time-Bins}Processing Time Bins}

The following image is a possible state of two bins in the algorithm
at some time $t$.

\includegraphics[scale=0.8]{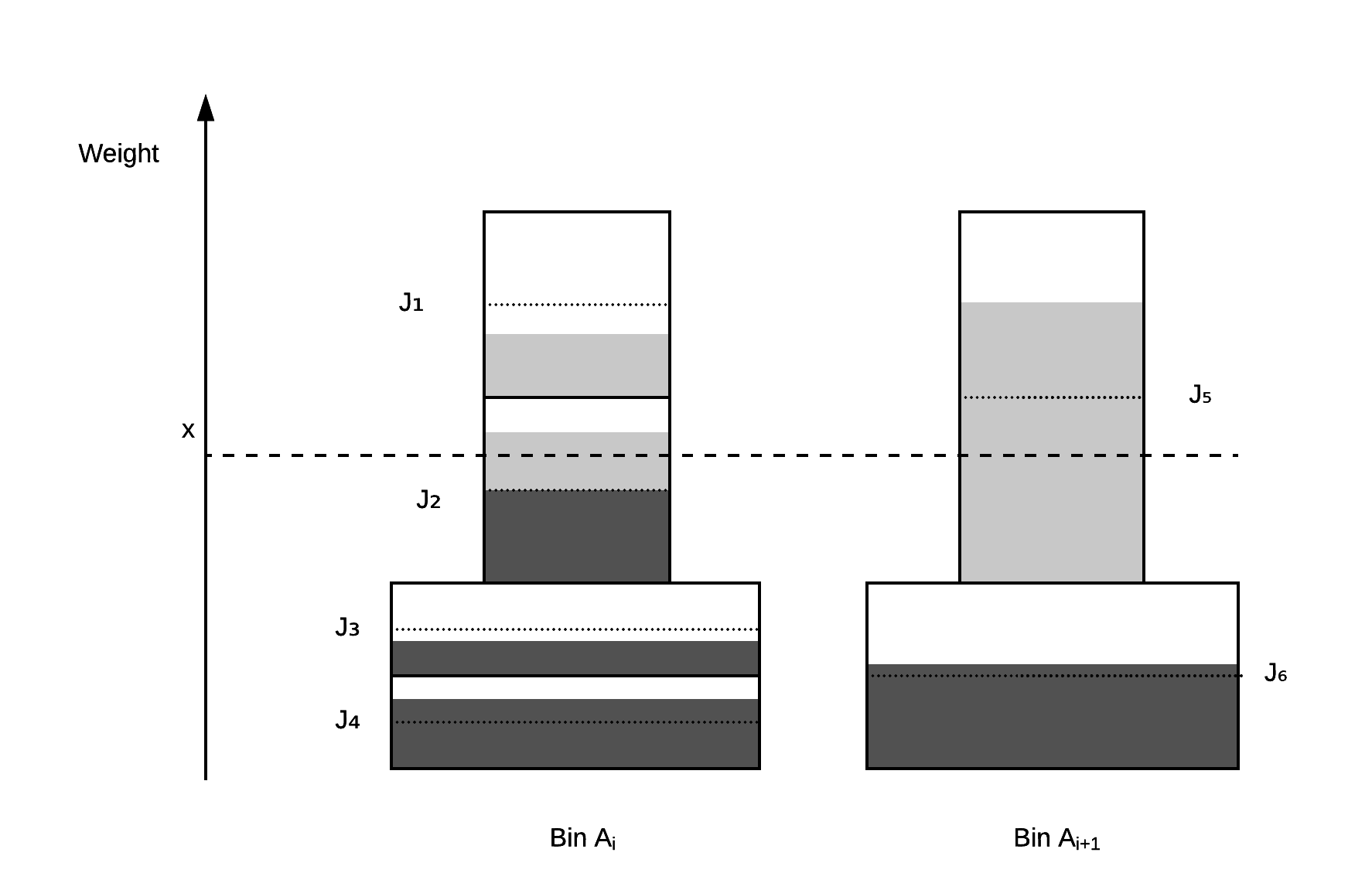}\\

\begin{raggedright}
In this figure, each job is described by a rectangular container.
The height of the container is the weight of the job, and its area
is $2^{j+1}$ for bin $A_{j}$. The area of the gray rectangles inside
each container represent the volume of the job. The container is separated
into two halves by a dotted line. 
\par\end{raggedright}
\begin{raggedright}
The jobs are arranged in the bins according to $\prec$, with $J_{4}\prec J_{3}\prec J_{2}\prec J_{1}$
and $J_{6}\prec J_{5}$. $\beta(J,t)$ is the height of $J$ in this
figure - for example, $\beta(J_{2},t)=w(J_{3})+w(J_{4})$.
\par\end{raggedright}
\begin{raggedright}
The volume under $x$ is illustrated by a darker gray. For example:
\par\end{raggedright}
\begin{itemize}
\item \begin{raggedright}
For $J_{3}$ we have $\beta(J_{3},t)+w(J_{3})=w(J_{4})+w(J_{3})\le x$
and therefore $\gamma_{J_{3}}(x,t)=p_{t}(J_{3})$. 
\par\end{raggedright}
\item \begin{raggedright}
For $J_{2}$ we have that $\beta(J_{2},t)+\frac{w(J_{2})}{2}\le x<\beta(J_{2},t)+w(J_{2})$
giving that $\gamma_{J_{2}}(x,t)$ is the area of $J_{2}$ under the
dotted line. In other words, it is $\min(2^{i},p_{t}(J_{2}))=2^{i}$. 
\par\end{raggedright}
\item \begin{raggedright}
For $J_{5}$ we have that $\beta(J_{5},t)+\frac{w(J_{5})}{2}>x$,
which yields $\gamma_{J_{5}}(x,t)=0$.
\par\end{raggedright}
\end{itemize}
\raggedright{}As for processing, note that the algorithm will process
bin $A_{i+1}$ and not $A_{i}$, though their weights are equal. This
is since $J_{1}$ adds only $\frac{w(J_{1})}{2}$ to $\W_{A_{i}}(t)$,
because $J_{1}$'s volume is under its dotted line. Since the volume
of $J_{5}$ is above $J_{5}$'s dotted line, it adds its full weight,
$w(J_{5})$, to $\W_{A_{i+1}}(t)$.
\end{figure}
To prove Theorem \ref{thm:PTCompetitive}, we first prove Lemmas \ref{lem:NondecreasingArrival-2}
and \ref{lem:ImprovingArrival-1}. We then show that those lemmas,
together with Proposition \ref{prop:ExecutionMeansWholeVolume}, imply
the competitiveness of the algorithm.

Lemma \ref{lem:NondecreasingArrival-2} states that the volume of
a bin that is covered by a bar does not decrease upon the arrival
of a new job at that bin. Lemma \ref{lem:ImprovingArrival-1} states
that when a job arrives at a bin, raising the bar by three times the
weight of that new job is enough in order to gain its volume. A visualization
of lemma \ref{lem:NondecreasingArrival-2} is given in figure \ref{fig:Visualization-of-Lemma}.

\begin{obs}

If at some point in time $t$ we have that $J_{1}\prec_{t}J_{2}$,
then $J_{1}\prec_{t^{\prime}}J_{2}$ at any time $t^{\prime}$ in
which both $J_{1}$ and $J_{2}$ are alive. This is since processing
a job can only increase its priority, and the algorithm always processes
a job of maximum priority. We can therefore write $J_{1}\prec J_{2}$.

\end{obs}

\begin{figure}[!t]
\caption{\label{fig:Visualization-of-Lemma}Visualization of Lemma \ref{lem:NondecreasingArrival-2}}

\begin{centering}
\includegraphics[page=1]{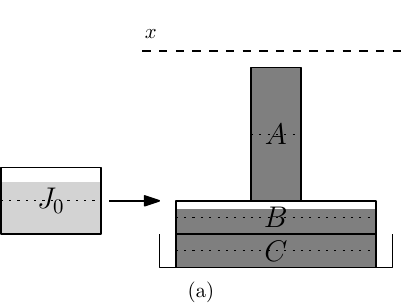}\includegraphics[page=2]{NondecreasingVolume}\\
\par\end{centering}
\raggedright{}The two images above, image (a) and image (b), show
a bin before and after the arrival of $J_{0}$. In both (a) and (b)
we have a bar at height $x$, with the volume covered by the bar colored
in dark gray. Lemma \ref{lem:NondecreasingArrival-2} states that
there is at least as much dark gray volume in image (b) as in image
(a).
\end{figure}
\begin{lem}
\label{lem:NondecreasingArrival-2}If a job $J_{0}$ is released at
time $t$ and assigned to $A_{i}$, and defining $t^{-}$ to be the
time $t$ prior to the event of $J_{0}$'s release, then for every
non-negative $x$ we have that 
\[
B_{A_{i}}(x,t)\ge B_{A_{i}}(x,t^{-})
\]
\end{lem}

\begin{proof}
Denote by $L\subseteq A_{i}(t)$ the set of jobs $J^{\prime}$ such
that $\gamma_{J^{\prime}}(x,t^{-})>\gamma_{J^{\prime}}(x,t)$. Note
that for every $J^{\prime}\in L$ we have $J_{0}\prec J^{\prime}$,
and thus $w(J^{\prime})\ge w(J_{0})$. 

For every $J^{\prime}\in L$, we define $\Delta_{J^{\prime}}=\gamma_{J^{\prime}}(x,t)-\gamma_{J^{\prime}}(x,t^{-})$
(note that from the definition of $L$, we have that $\Delta_{J^{\prime}}<0$).
It is enough to show 
\begin{equation}
\gamma_{J_{0}}(x,t)+\sum_{J^{\prime}\in L}\Delta_{J^{\prime}}\ge0\label{eq:ToShow_2a_P}
\end{equation}

Define $J_{\max}$ the maximal job in $L$ at time $t$ with respect
to $\prec_{t}$. We seperate into two cases.

\textbf{Case 1:} If $w(J_{\max})>w(J_{0})$, then due to the weights
being powers of $2$ we must have $w(J_{\max})\ge2w(J_{0})$. In this
case, the arrival of $J_{0}$ either:
\begin{itemize}
\item Changed $J_{\max}$ from being completely covered by $x$ at time
$t$, to having $x$ at least half the height of $J_{\max}$, but
below the top of $J_{\max}$.
\item Changed $J_{\max}$ from having $x$ at least half the height of $J_{\max}$,
to being below half the height of $J_{\max}$ but at least $J_{\max}$'s
base.
\end{itemize}
In either case, $J_{\max}$ has at most $2^{i}$ less volume covered
by $x$ at time $t$ than at $t^{-}$. In addition, since $x$ intersects
$J_{\max}$ at time $t$, all volume of jobs below $J_{\max}$ is
covered by $J_{\max}$. This yields $L=\{J_{\max}\}$, as well as
$\gamma_{J_{0}}(x,t)=p(J_{0})$. This completes case $1$.

\textbf{Case 2:}

Assume that $w(J_{max})=w(J_{0})$. Since $J_{0}\prec J_{\max}$,
we have that $p_{t}(J_{\max})\le p_{t}(J_{0})=p(J_{0})$. We separate
into the following subcases:
\begin{itemize}
\item Suppose $x$ is at least $J_{\max}$'s top at time $t^{-}$. It must
be at least $J_{\max}$'s bottom, and below $J_{\max}$'s top, at
time $t$. Since $x$ intersects $J_{\max}$ at time $t$, we have
$L=\{J_{\max}\}$ and $\gamma_{J_{0}}(x,t)=p(J)$. This gives us that
$\Delta_{J}\ge-p_{t}(J_{\max})\ge p(J)$, completing this subcase.
\item Suppose $x$ is at least $\beta(J_{\max},t^{-})+\frac{w(J_{\max})}{2}$
and below $J_{\max}$'s top at time $t^{-}$. Then at time $t$, $\beta(J_{\max},t)-\frac{w(J_{0})}{2}\le x<\beta(J_{\max},t)$.
Denoting by $J^{\prime}\in A_{i}(t)$ the job directly below $J_{\max}$
at time $t$, we have that $x$ intersects $J^{\prime}$ at time $t$. 
\begin{itemize}
\item If $J^{\prime}=J_{0}$, then $L=\{J_{\max}\}$. $J_{\max}$ lost its
lower half, which is at most $2^{i}$ volume. However, $J_{0}$'s
lower half is covered, and $p(J_{0})\ge2^{i}$. Therefore, $\gamma_{J_{0}}(x,t)=2^{i}$
as required. 
\item If $J^{\prime}\neq J_{0}$, then $L\subseteq\{J_{\max},J^{\prime}\}$.
$x$ lost $J_{\max}$'s lower half, at most $2^{i}$ volume. Since
$p(J_{0})\ge2^{i}$, this is at most the volume of $J_{0}$'s lower
half. It also lost $J^{\prime}$'s upper half, which has less volume
than $J_{0}$'s upper half (recall that $p(J_{0})\ge p_{t}(J^{\prime})$).
However, since $x$ intersects $J^{\prime}$ at time $t$, $\gamma_{J_{0}}(x,t)=p(J_{0})$,
as required.
\end{itemize}
\end{itemize}
This completes case 2, and the lemma.
\end{proof}
\begin{prop}
\label{prop:ShortJobs-2}At every time $t$ and bin $A_{i}$, there
exists at most one well-processed job $J\in A_{i}(t)$ of each weight.
\end{prop}

\begin{proof}
Since no job arrives well-processed, every job must become well-processed
through processing. For any possible weight $2^{j}$, once a job $J$
of that weight becomes well-processed the algorithm does not process
any non-well-processed job of the same weight in $A_{i}$ until $J$
is complete. This implies the lemma.
\end{proof}

\begin{cor}
\label{cor:ShortJobs-2}At every time $t$, bin $A_{i}$ and $j\in\Z$,
let $S\subseteq A_{i}(t)$ be the subset of well-processed jobs of
weight at most $2^{j}$. Then $w(S)<2^{j+1}$.
\end{cor}

\begin{proof}
Let $j_{\min}$ the minimal index such that $S$ has a job of weight
$2^{j_{\min}}$. Using Proposition \ref{prop:ShortJobs-2}:
\[
w(S)\le\sum_{k=j_{\min}}^{j}2^{k}<2^{j+1}
\]
\end{proof}

\begin{lem}
\label{lem:ImprovingArrival-1}If a job $J_{0}$ is released at time
$t$ and assigned to $A_{i}$, and defining $t^{-}$ to be the time
$t$ prior to the event of $J_{0}$'s release, then for every non-negative
$x$ we have that 
\[
B_{A_{i}}(x+3\cdot w(J_{0}),t)\ge B_{A_{i}}(x,t^{-})+p(J_{0})
\]
\end{lem}

\begin{proof}
Note that for every job $J^{\prime}\in A_{i}(t^{-})$, the base of
$J^{\prime}$ can increase by at most $w(J_{0})$ upon the arrival
of $J_{0}$. Thus, since the bar $x$ is also raised by $3\cdot w(J_{0})$
(more than $w(J_{0})$), we have $\gamma_{J^{\prime}}(x+3\cdot w(J_{0}),t)\ge\gamma_{J^{\prime}}(x,t^{-})$.
Therefore, it is enough to find a subset $S\subseteq A_{i}(t)$ such
that the volume of $S$ covered by $x+3\cdot w(J_{0})$ at $t$ is
at least $p(J_{0})$ more than the volume of $S$ covered by $x$
at $t^{-}$.

We observe the position of $J_{0}$ at time $t$. If $J_{0}$ is covered
by $x+3w(J_{0})$, that is $\beta(J_{0},t)\le x+2w(J_{0})$, then
$\gamma_{J_{0}}(x+3w(J_{0}),t)=p(J_{0})$. Noting that $J_{0}$ did
not exist at $t^{-}$, we choose $S=\{J_{0}\}$ and the proof is complete.

Otherwise, assume $\beta(J_{0},t)>x+2w(J_{0})$. We consider the set
of jobs that begin and end in the interval $[x,x+3w(J_{0})]$ \textendash{}
that is, jobs $J^{\prime}$ such that $\beta(J^{\prime},t)\ge x$
and $\beta(J^{\prime},t)+w(J^{\prime})\le x+3w(J_{0})$. Each job
$J^{\prime}\in S$ has the property that $\gamma_{J^{\prime}}(x,t^{-})=0$
and $\gamma_{J^{\prime}}(x+3w(J_{0}),t)=p_{t}(J^{\prime})$. To complete
the proof, it only remains to be seen that $p_{t}(S)\ge p(J_{0})$.

Observe that the interval $[x,x+3w(J_{0})]$ can only contain jobs
from the set $S$, and possibly some part of the job immediately below
$S$ (denoted $J_{\bot}$) and the job immediately above $S$ (denoted
$J_{\top}$). Therefore, we must have that $w(S)\ge3w(J_{0})-w(J_{\bot})-w(J_{\top})$. 

Since $S$ lies wholly below $J_{0}$, we must have that $J_{\bot},J_{\top}$
have weight at most $w(J_{0})$ \textendash{} thus, $S$ is non-empty.
We observe two cases, based on the maximal weight of a job in $S$.

\textbf{Case 1: $\max_{J\in S}w(J)=w(J_{0})$}

$S$ contains a job $J^{\prime}$ of weight $w(J_{0})$, but $J^{\prime}$
is below $J_{0}$, yielding $p_{t}(J^{\prime})\ge p_{t}(J_{0})=p(J_{0})$
and completing the case.

\textbf{Case 2: $\max_{J\in S}w(J)\le\frac{w(J_{0})}{2}$}

In this case, we show that $S$ contains two jobs that are not well-processed,
yielding $p_{t}(S)\ge2^{i+1}\ge p(J_{0})$ and completing the proof.
$J_{\bot}$ is below $S$, and thus $w(J_{\bot})\le\frac{w(J_{0})}{2}$,
yielding $w(S)\ge\frac{3w(J_{0})}{2}$. However, due to Lemma \ref{cor:ShortJobs-2},
the total weight of well-processed jobs of weight at most $\frac{w(J_{0})}{2}$
is less than $\frac{w(J_{0})}{2}$. Thus, the total weight of the
jobs of $S$ that are not well-processed is more than $\frac{w(J_{0})}{2}$,
which means at least two jobs. 

\end{proof}
We now show that Lemmas \ref{lem:NondecreasingArrival-2} and \ref{lem:ImprovingArrival-1}
imply the competitiveness of the algorithm. The outline of the proof
is shown in figure \ref{fig:Visualization-of-Competitiveness}, which
shows the state of the algorithm (on the left) and the optimum (on
the right) at any time $t$. First, Lemma \ref{lem:TrackingLemma-2}
shows that the bar $3\cdot W^{\ast}(t)$ covers the entire volume
in the algorithm. Lemma \ref{lem:GoodBinsImplyCompetitive} then uses
that fact to show that the bar $6\cdot W^{\ast}(t)$ covers the entire
\emph{weight }in the algorithm (in the figure, it covers the top of
each bin). The fact that there are $O(\log P)$ bins then implies
competitiveness.
\begin{figure}[!t]
\caption{\label{fig:Visualization-of-Competitiveness}Visualization of Competitiveness
Argument}

\centering{}\includegraphics[page=1]{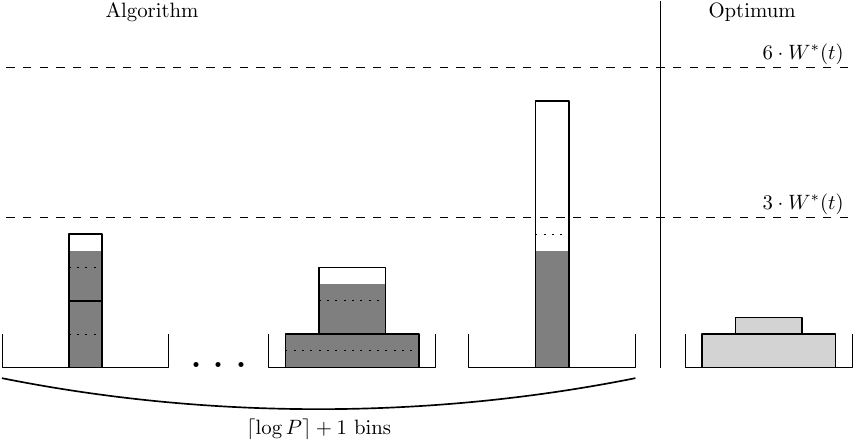}\\
\end{figure}

The following proposition makes use of the choice of bin made by the
algorithm.
\begin{prop}
\label{prop:ExecutionMeansWholeVolume}At a time $t$ and a non-negative
$x$, if $B(x,t)$ decreases as a result of the algorithm's processing,
then $B(x,t)=V(t)$.
\end{prop}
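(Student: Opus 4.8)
The plan is to argue that the only way the algorithm's processing can decrease $B(x,t)$ is by working on the top job $M_A(t)$ of the bin $A$ that is currently being processed, and that when this causes a decrease, the bar at height $x$ must already be "full" for every bin. First I would observe that at time $t$ the algorithm processes a single bin $A = \arg\max_A \hat{\W}_A(t)$, and only the job $M_A(t)$ has its remaining volume changing; for every other bin $A'$ and every job $J' \in A'(t)$, the quantities $w(J')$, $p_t(J')$, $\beta(J',t)$ are momentarily unchanged, so $B_{A'}(x,t)$ does not change due to processing. Hence any decrease in $B(x,t)$ comes from a decrease in $B_A(x,t)$.

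Next I would examine how $B_A(x,t)$ can change when we process $J = M_A(t)$. Only $\gamma_J(x,t) = \mu_A(x, w(J), p_t(J), \beta(J,t))$ can change, since $\beta(J,t)$ does not change (processing $J$ cannot alter which jobs are below it in $\prec$, by Proposition~\ref{prop:OrderingIsInvariant}, and anyway it only reduces $p_t(J)$). By property~\ref{enu:Contribution-PuffyDoesntChange} of the contribution function, if $\gamma_J(x,t) = y < p_t(J)$, then reducing $p_t(J)$ to any $p' \ge y$ leaves the contribution equal to $y$; so $\gamma_J(x,t)$ can only decrease when $\gamma_J(x,t) = p_t(J)$, i.e. when $J$ is contributing its entire remaining volume to the bar at height $x$. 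In that case $x \ge \hat{\W}_A(t)$ by definition of the height $\hat{\W}_A$. But $A$ is the bin maximizing $\hat{\W}_A(t)$, so $x \ge \hat{\W}_{A'}(t)$ for every bin $A' \in \A$ as well.

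Finally I would invoke Proposition~\ref{prop:SingleBinContribution}: since every bin $A'$ is $c$-good (for the relevant constant $c$) and $x \ge \hat{\W}_{A'}(t)$, we get $B_{A'}(x,t) = p_t(A'(t))$ for every $A'$. Summing over all bins, $B(x,t) = \sum_{A' \in \A} p_t(A'(t)) = p_t(Q(t)) = V(t)$, as required.

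The main obstacle I anticipate is making the "only $\gamma_{M_A(t)}(x,t)$ changes" step fully rigorous: one must be careful that processing $M_A(t)$ does not change $\beta(\cdot,t)$ for any job (including jobs in the same bin above or below it) and does not change $M_A(t)$ itself over an infinitesimal processing interval — this is where Proposition~\ref{prop:OrderingIsInvariant} and the fact that the algorithm works only on the top job are used — and to phrase the argument in terms of an instantaneous decrease rather than a discrete change, so that property~\ref{enu:Contribution-PuffyDoesntChange} applies cleanly.
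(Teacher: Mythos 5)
Your proposal is correct and follows essentially the same route as the paper's proof: use property~\ref{enu:Contribution-PuffyDoesntChange} to force $\gamma_J(x,t)=p_t(J)$ for the processed job $J=M_A(t)$, conclude $x\ge\hat{\W}_A(t)\ge\hat{\W}_{A'}(t)$ for all $A'$ by the algorithm's choice of bin, and finish via Proposition~\ref{prop:SingleBinContribution}. The only difference is that you spell out the preliminary reduction (that only $\gamma_{M_A(t)}(x,t)$ can change under processing, since $\beta$-values and other bins are untouched), which the paper leaves implicit.
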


\begin{proof}
Note that $\W_{A_{i}}(t)$ as defined before is exactly the minimal
bar that covers all volume in $A_{i}$ (or equivalently, all volume
of the top job in $A_{i}$). Also note that a bar can only lose volume
from the processing of a job if it covers the entire volume of that
job. 

Let $x$ be such that $B(x,t)$ decreases through processing. Since
the job being processed is the top job of some bin $A_{i}$, this
implies $x\ge\W_{A_{i}}(t)$. From the algorithm's definition, $\W_{A_{i}}(t)\ge\W_{A_{i^{\prime}}}(t)$
for all $A_{i^{\prime}}\in\A$, yielding that $x$ covers all volume
in $A_{i^{\prime}}$. Summing, we have $B(x,t)=V(t)$.
\end{proof}
\begin{defn}
For any two points in time $t,t^{\prime}$ such that $t^{\prime}\in[0,t]$,
let $Q_{t}(t^{\prime})$ be the subset of $Q(t)$ that was alive at
time $t^{\prime}$. Formally, $Q_{t}(t^{\prime})=Q(t)\cap Q(t^{\prime})$.
\end{defn}

\begin{lem}
\label{lem:TrackingLemma-2} At every time $t$, a bar at three times
the weight of the optimum covers at least the volume of the optimum.
Formally:
\[
V^{\ast}(t)\le B(3\cdot w(Q^{\ast}(t)),t)
\]
\end{lem}

\begin{proof}
To show the lemma, we show that for every $t^{\prime}\in[0,t]$ we
have: 
\begin{equation}
p_{t^{\prime}}^{*}(Q_{t}^{*}(t^{\prime}))\le B(3\cdot w(Q_{t}^{*}(t^{\prime})),t^{\prime})\tag*{(*)}\label{eq:Recursive}
\end{equation}
When $t^{\prime}=t$, Inequality \ref{eq:Recursive} yields the lemma.
We show \ref{eq:Recursive} by induction, as $t^{\prime}$ advances
from $0$ to $t$. 

We can see that for $t^{\prime}=0$, before the release of any jobs,
\ref{eq:Recursive} is trivially true.

Now consider all $k$ jobs that arrive in the interval $[0,t]$, and
denote their release times by $t_{1},...,t_{k}$, so that in time
$t_{i}$ the $i$'th job has already been released. For a job $i$,
let $t_{i}^{-}$ be the time $t_{i}$ before the event of the release
of the $i$'th job. For convenience, denote $t_{0}=0$ and $t_{k+1}^{-}=t$.
We show the following claims:

\textbf{Claim 1}: If \ref{eq:Recursive} holds for $t^{\prime}=t_{i}$
such that $0\le i\le k$, then \ref{eq:Recursive} holds for every
$t^{\prime}\in[t_{i},t_{i+1}^{-}]$.

Since no job arrives in that interval, we have that $p_{t^{\prime}}^{*}(Q_{t}^{*}(t^{\prime}))$
is non-increasing as $t^{\prime}$ increases. It remains to consider
all cases in which $B(3\cdot w(Q_{t}^{*}(t^{\prime})),t^{\prime})$
decreases as a result of the algorithm's processing. Let $t^{\prime}$
be such that $B(3\cdot w(Q_{t}^{*}(t^{\prime})),t^{\prime})$ decreases
as a result of execution. Then using Proposition \ref{prop:ExecutionMeansWholeVolume},
we have that $B(3\cdot w(Q_{t}^{*}(t^{\prime})),t^{\prime})=V(t^{\prime})=V^{*}(t^{\prime})\ge p_{t^{\prime}}^{*}(Q_{t}^{*}(t^{\prime}))$
as required. 

\textbf{Claim 2:} If \ref{eq:Recursive} holds for $t^{\prime}=t_{i}^{-}$
such that $1\le i\le k$, then \ref{eq:Recursive} holds for $t^{\prime}=t_{i}$.

We'll observe the following cases.

\textbf{Case 1: }The $i$'th job, denoted $J$, is such that $J\notin Q_{t}^{*}(t_{i})$.
we have that $Q_{t}^{*}(t_{i})=Q_{t}^{*}(t_{i}^{-})$, and thus
\[
B(3\cdot w(Q_{t}^{*}(t_{i})),t_{i}^{-})=B(3\cdot w(Q_{t}^{*}(t_{i}^{-})),t_{i}^{-})\ge p_{t_{i}^{-}}^{*}(Q_{t}^{*}(t_{i}^{-}))=p_{t_{i}}^{*}(Q_{t}^{*}(t_{i}))
\]

Denoting by $A\in\A$ the bin to which $J$ has been assigned, and
using Lemma \ref{lem:NondecreasingArrival-2}, we have that $B_{A}(3\cdot w(Q_{t}^{*}(t_{i})),t_{i})\ge B_{A}(3\cdot w(Q_{t}^{*}(t_{i})),t_{i}^{-})$.
As for any other bin $A^{\prime}\in\A$, since $A^{\prime}$ has not
changed upon the release of $J$ we must have that $B_{A^{\prime}}(3\cdot w(Q_{t}^{*}(t_{i})),t_{i}^{-})=B_{A^{\prime}}(3\cdot w(Q_{t}^{*}(t_{i})),t_{i})$.
This gives us that $B(3\cdot w(Q_{t}^{*}(t_{i})),t_{i})\ge B(3\cdot w(Q_{t}^{*}(t_{i})),t_{i}^{-})$,
and thus the claim for this case.

\textbf{Case 2: }The $i$'th job, denoted $J$, is such that $J\in Q_{t}^{*}(t_{i})$.
we have that $Q_{t}^{*}(t_{i})=Q_{t}^{*}(t_{i}^{-})\cup\{J\}$, and
thus
\[
B(3\cdot w(Q_{t}^{*}(t_{i}^{-})),t_{i}^{-})\ge p_{t_{i}^{-}}^{*}(Q_{t}^{*}(t_{i}^{-}))=p_{t_{i}}^{*}(Q_{t}^{*}(t_{i}))-p(J)
\]

Denoting by $A\in\A$ the bin to which $J$ has been assigned, using
Lemma \ref{lem:ImprovingArrival-1}, we have that $B_{A}(3\cdot w(Q_{t}^{*}(t_{i})),t_{i})=B_{A}(3\cdot w(Q_{t}^{*}(t_{i}^{-}))+3\cdot w(J),t_{i})\ge B_{A}(3\cdot w(Q_{t}(t_{i}^{-})),t_{i}^{-})+p(J)$.
As for any other bin $A^{\prime}\in\A$, as in the previous case we
must have that $B_{A^{\prime}}(3\cdot w(Q_{t}^{*}(t_{i}^{-})),t_{i}^{-})=B_{A^{\prime}}(3\cdot w(Q_{t}^{*}(t_{i}^{-})),t_{i})\le B_{A^{\prime}}(3\cdot w(Q_{t}^{*}(t_{i})),t_{i})$.
Summing over the bins gives us the claim for this case.

Claims 1 and 2 show \ref{eq:Recursive} for every $t^{\prime}\in[0,t]$,
completing the proof.
\end{proof}

\begin{lem}
\label{lem:GoodBinsImplyCompetitive}Denoting by $\A^{\prime}\subset\A$
the set of non-empty bins at a time $t$, we have: 
\[
W(t)\le6\cdot|\A^{\prime}|\cdot W^{*}(t)
\]
\end{lem}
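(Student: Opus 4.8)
The plan is to sandwich $W(t)$ between two multiples of the largest bin height $x^{*}\defi\max_{A\in\A^{\prime}}\hat{\W}_{A}(t)$: a bound $W(t)\le 2|\A^{\prime}|\cdot x^{*}$ that comes purely from the shape of the contribution functions, and a bound $x^{*}\le c\cdot W^{*}(t)$ that comes from the tracking lemma (Lemma~\ref{lem:TrackingLemma-2}). Composing the two yields $W(t)\le 2c|\A^{\prime}|\cdot W^{*}(t)$. If $\A^{\prime}=\emptyset$ there is nothing to prove, so I would fix a bin $A^{*}\in\A^{\prime}$ attaining $\hat{\W}_{A^{*}}(t)=x^{*}$; since $A^{*}$ is non-empty, $J^{*}\defi M_{A^{*}}(t)$ exists and $p_{t}(J^{*})>0$.

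\textbf{First inequality.} First I would show $w(A(t))\le 2\hat{\W}_{A}(t)$ for every non-empty bin $A$. Writing $J\defi M_{A}(t)$, maximality of $J$ under $\prec_{t}$ gives $\beta(J,t)=w(A(t))-w(J)$, and property~\ref{enu:Contribution-MinimalWeight} of $\mu_{A}$ makes $\gamma_{J}(x,t)=0$ whenever $x<\beta(J,t)+\tfrac{w(J)}{2}$; since $p_{t}(J)>0$, no such $x$ can satisfy $\gamma_{J}(x,t)\ge p_{t}(J)$, whence $\hat{\W}_{A}(t)\ge\beta(J,t)+\tfrac{w(J)}{2}=w(A(t))-\tfrac{w(J)}{2}\ge\tfrac12 w(A(t))$ (using $w(J)\le w(A(t))$). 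Summing over $A\in\A^{\prime}$ and bounding each $\hat{\W}_{A}(t)$ by $x^{*}$ then gives $W(t)=\sum_{A\in\A^{\prime}}w(A(t))\le 2\sum_{A\in\A^{\prime}}\hat{\W}_{A}(t)\le 2|\A^{\prime}|\cdot x^{*}$.

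\textbf{Second inequality.} Next I would argue by contradiction that $x^{*}\le c\cdot W^{*}(t)$. If instead $c\cdot W^{*}(t)<x^{*}=\inf\{x\,|\,\gamma_{J^{*}}(x,t)\ge p_{t}(J^{*})\}$, then $c\cdot W^{*}(t)$ lies strictly below that infimum, so $\gamma_{J^{*}}(c\cdot W^{*}(t),t)<p_{t}(J^{*})$; combining this with $\gamma_{J}(c\cdot W^{*}(t),t)\le p_{t}(J)$ for all other jobs (property~\ref{enu:Contribution-ImageRange}) and summing over all bins yields the \emph{strict} bound $B(c\cdot W^{*}(t),t)<\sum_{A\in\A}p_{t}(A(t))=V(t)=V^{*}(t)$. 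But Lemma~\ref{lem:TrackingLemma-2} at $t'=t$, where $Q^{*}_{t}(t)=Q^{*}(t)$ and $w(Q^{*}_{t}(t))=W^{*}(t)$, gives $V^{*}(t)=p^{*}_{t}(Q^{*}_{t}(t))\le B(c\cdot w(Q^{*}_{t}(t)),t)=B(c\cdot W^{*}(t),t)$ --- a contradiction. Hence $x^{*}\le c\cdot W^{*}(t)$, and together with the first inequality, $W(t)\le 2|\A^{\prime}|\cdot x^{*}\le 2c|\A^{\prime}|\cdot W^{*}(t)$.

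\textbf{Where the difficulty lies.} The one genuinely delicate point is the passage from a \emph{volume} estimate --- all the tracking lemma delivers --- to a \emph{weight} estimate on $W(t)$. Property~\ref{enu:Contribution-MinimalWeight} of contribution functions is precisely the bridge: it forces a bar placed below half the weight beneath a job to receive none of that job's volume, so a heavy bin certifies a tall height $\hat{\W}_{A}(t)$, while conversely (via the tracking lemma and the identity $V(t)=V^{*}(t)$) a tall $\hat{\W}_{A}(t)$ forces $W^{*}(t)$ to be large. Once this is set up, keeping track of the constants --- the $\tfrac12$ from ``half the weight'', the single $c$ from the tracking lemma, and the $|\A^{\prime}|$ from summing over bins --- is routine.
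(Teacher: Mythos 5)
Your proof is correct and takes essentially the same route as the paper: both apply Lemma~\ref{lem:TrackingLemma-2} at $t'=t$ together with $V(t)=V^{*}(t)$ to conclude the bar at height $c\cdot W^{*}(t)$ captures all alive volume, then use property~\ref{enu:Contribution-MinimalWeight} of the contribution function to turn this into the per-bin bound $w(A(t))\le 2c\cdot W^{*}(t)$ and sum over $\A^{\prime}$. The only difference is cosmetic --- the paper argues directly that $B(c\cdot W^{*}(t),t)=V(t)$ forces $\gamma_{M_{A}(t)}(c\cdot W^{*}(t),t)=p_{t}(M_{A}(t))$ for every non-empty bin, whereas you introduce the intermediate quantity $x^{*}=\max_{A}\hat{\W}_{A}(t)$ and establish $x^{*}\le c\cdot W^{*}(t)$ by contradiction.
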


\begin{proof}
For every time $t$, using Lemma \ref{lem:TrackingLemma-2}, we have
that $B(3\cdot w(Q^{*}(t)),t)\ge V^{*}(t)$

Now, we can assume without loss of generality that that the optimal
algorithm is never idle when a job remains uncompleted. This implies
$V(t)=V^{\ast}(t)$, and thus $B(3\cdot w(Q^{*}(t)),t)=V(t)$. 

Denoting $x=3\cdot w(Q^{*}(t))$, for every bin $A\in\A$, observe
$J=\ttop_{A}(t)$. Since $x$ covers all of $J$'s volume, and thus
$x\ge\beta(J,t)+\frac{w(J)}{2}=w(A(t)\backslash\{J\})+\frac{w(J)}{2}$.
Thus, we have that $2x\ge2w(A(t)\backslash\{J\})+w(J)\ge w(A(t))$.
Therefore, $w(A(t))\le6\cdot w(Q^{*}(t))$, and summing over all $\A^{\prime},$
$W(t)\le6\cdot|\A^{\prime}|\cdot W^{*}(t)$.
\end{proof}
We can now prove the main theorem.
\begin{proof}
(of Theorem \ref{thm:PTCompetitive}) The algorithm assigns jobs to
at most $\lceil\log P\rceil+1$ bins. Using Lemma \ref{lem:GoodBinsImplyCompetitive},
we therefore have that: 
\[
W(t)\le6(\lceil\log P\rceil+1)\cdot W^{*}(t)
\]

This gives us that the algorithm is $O(\log P)$-competitive.
\end{proof}

\section{\label{sec:DensityAlgo}The $O(\log D)$-Competitive Algorithm}

We now describe an $O(\log D)$-competitive algorithm. As in Section
\ref{sec:ProcessingTimeAlgo}, this algorithm has an assignment part
and a processing part. The bins themselves are the set $\A=\{A_{i}\,|\,i\in\Z\}$. 

This algorithm makes the assumption that every job $J$ arrives so
that for some integer $i$ we have $d(J)=2^{i}$. This assumption
can be enforced by rounding the weight of each incoming job up by
a factor of at most $2$ to give the desired density. This only adds
a factor of $2$ to the competitive ratio of the algorithm, similarly
to Section \ref{sec:ProcessingTimeAlgo}.
\begin{defn}
For some positive $x$, define $\lg x=\lfloor\log_{2}x\rfloor$. 
\end{defn}

We now redefine the ordering $\prec_{t}$ within a bin.

\begin{defn}
For every bin $A$ at time $t$, and for distinct $J_{1},J_{2}\in A(t)$,
we write $J_{1}\prec_{t}J_{2}$ if $\lg(w(J_{1}))<\lg(w(J_{2}))$,
or if $\lg(w(J_{1}))=\lg(w(J_{2}))$ and $J_{2}$ is partially processed.
If $\lg(w(J_{1}))=\lg(w(J_{2}))$ and both $J_{1},J_{2}$ are not
partially processed, we break ties according to the indices of the
jobs, $I(J_{1})$ and $I(J_{2})$ (i.e. arbitrarily).
\end{defn}

Note that the previous definition did not address the case of $\lg(w(J_{1}))=\lg(w(J_{2}))$
and $J_{1}$, $J_{2}$ are both partially processed. Proposition \ref{prop:ShortJobsDensity}
shows that this case never happens, and thus the ordering is well
defined.

We recall the definition $\ttop_{A}(t)$ as the maximal job in bin
$A$ at time $t$ with respect to $\prec_{t}$.
\begin{defn}
For a bin $A=A_{i}\in\A$ and a time $t$, define 
\[
\W_{A}^{\prime}(t)=w(A(t)\backslash\{\ttop_{A}(t)\})+2^{\lg(w(\ttop_{A}(t)))}+\frac{p_{t}(\ttop_{A}(t))}{2^{i}}
\]
\end{defn}

\SetAlgoNoEnd

Algorithm \ref{alg:DAlgo} as described below is $O(\log D)$ competitive.

\begin{algorithm}[h]
\caption{\label{alg:DAlgo}$O(\log D)$ Competitive}

Whenever a new job $J$ arrives:\\
\Indp
assign $J$ to $A_{\log_{2}(d(J))}$\\
\Indm
At any time $t$:\\
\Indp
For $A=\arg \max _A(\mathscr{W}^{\prime}_A(t))$, process $\text{top}_A(t)$.\\
\Indm
\end{algorithm}

\section{\label{sec:DensityAnalysis}Analysis of $O(\log D)$-Competitive
Algorithm}

In this section, we prove the following theorem.
\begin{thm}
\label{thm:DCompetitive}Algorithm \ref{alg:DAlgo} is $O(\log D)$-competitive.
\end{thm}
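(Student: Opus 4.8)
The plan is to reuse the entire machinery of Section~\ref{sec:AnalysisPT} --- the notion of a $c$-good bin, together with Lemma~\ref{lem:TrackingLemma-2} and Lemma~\ref{lem:GoodBinsImplyCompetitive} from Section~\ref{subsec:BinGoodnessAndCompetitiveness}, which are explicitly independent of weights being powers of $2$ --- and to supply only (i) the correct contribution function for a density bin and (ii) a proof that density bins are $c$-good. First I would fix, for bin $A_i$,
\[
\mu_i(x,w,p,h)=\min\!\left(p,\ \max\!\left(0,\ 2^{i}\,(x-h-2^{\lg w})\right)\right),
\]
which is exactly ``the amount of liquid in the virtual copy lying below the bar'': the virtual copy of a job of base $h$, weight $w$, volume $p$ occupies the height interval $[h+2^{\lg w},\,h+2^{\lg w}+w)$, has width $2^{i}$, and is filled from the bottom with $p$ units. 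I would verify the six axioms of Definition~\ref{def:ContributionFunction}; the only ones requiring a remark are \ref{enu:Contribution-MinimalWeight}, which follows from the elementary bound $\frac{w}{2}<2^{\lg w}\le w$ (so $x-h<\frac{w}{2}$ forces $x-h-2^{\lg w}<0$), and \ref{enu:Contribution-PuffyDoesntChange}, which holds since the value of $\mu_i$ below $p$ does not depend on $p$. Plugging in, $\hat{\W}_{A_i}(t)=\beta(M_{A_i}(t),t)+2^{\lg w(M_{A_i}(t))}+\frac{p_{t}(M_{A_i}(t))}{2^{i}}=\W_{A_i}^{\prime}(t)$, so Algorithm~\ref{alg:DAlgo} is exactly the ``process the bin of largest $\hat{\W}_{A}$'' algorithm, and Section~\ref{subsec:BinGoodnessAndCompetitiveness} applies once goodness is established.

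\textbf{Goodness of a density bin.} Proposition~\ref{prop:OrderingIsInvariant} is already stated for arbitrary weights, so it transfers verbatim. For property~\ref{enu:GoodnessOrder} of goodness, suppose $J_1\prec J_2$ in $A_i$ and $\gamma_{J_2}(x,t)>0$; then by property~\ref{enu:Contribution-MinimalWeight} of $\mu_i$,
\[
x>\beta(J_2,t)+2^{\lg w(J_2)}\ \ge\ \bigl(\beta(J_1,t)+w(J_1)\bigr)+2^{\lg w(J_1)}\ \ge\ \beta(J_1,t)+2^{\lg w(J_1)}+\frac{p_{t}(J_1)}{2^{i}},
\]
using $\beta(J_2,t)\ge\beta(J_1,t)+w(J_1)$, $\lg w(J_2)\ge\lg w(J_1)$, and $p_{t}(J_1)\le p(J_1)=2^{i}w(J_1)$; hence $\gamma_{J_1}(x,t)=p_{t}(J_1)$. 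Properties~\ref{enu:GoodnessArrivalNondec} and \ref{enu:GoodnessArrivalInc} are the analogues of Lemmas~\ref{lem:NondecreasingArrival-2} and \ref{lem:ImprovingArrival-1}. When a job $J_0$ of density $2^{i}$ is released into $A_i$, the bases of all jobs $J^{\prime}$ with $J_0\prec J^{\prime}$ rise by exactly $w(J_0)$ and nothing else changes, so by shift-invariance each old contribution $\gamma_{J^{\prime}}(\cdot,t^{-})$ is recovered by $\gamma_{J^{\prime}}(\cdot+w(J_0),t)$; I would then, exactly as in Section~\ref{subsec:ProcessingTimeGoodness}, let $L$ be the set of jobs whose contribution to the fixed bar strictly dropped, take $J_{\max}=M_L(t)$, and split on whether $\lg w(J_{\max})>\lg w(J_0)$ or $\lg w(J_{\max})=\lg w(J_0)$, showing that $\gamma_{J_0}$ together with the gain from re-stacking $J_0$'s own weight class absorbs $\sum_{J^{\prime}\in L}\Delta_{J^{\prime}}$ (and, for property~\ref{enu:GoodnessArrivalInc}, an extra $p(J_0)$) at the cost of raising the bar by $c\cdot w(J_0)$ for an absolute constant $c$.

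\textbf{The quantitative input and conclusion.} In place of Lemma~\ref{lem:ShortJobs-2} and Corollary~\ref{cor:ShortJobs-2} I expect to prove: in every bin, at every time, each weight class $\{J:\lg w(J)=j\}$ contains at most one job with $p_{t}(J)<p(J)$ (i.e.\ at most one job of each class has ever been processed). This again follows from Proposition~\ref{prop:OrderingIsInvariant}: a job entering a class later carries density $2^{i}$ and therefore lies strictly below any job already present in that class under $\prec$, hence cannot be selected for processing while that earlier job is alive. Consequently the total weight of all partially-processed jobs of weight at most $x$ is $O(x)$, whereas every \emph{non}-partially-processed job $J^{\prime}$ has $p_{t}(J^{\prime})=p(J^{\prime})=2^{i}w(J^{\prime})$, so its contribution is simply proportional to its weight --- exactly what the Case-1 estimate in the analogue of Lemma~\ref{lem:ImprovingArrival-1} needs. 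Once every density bin is $c$-good, Lemma~\ref{lem:GoodBinsImplyCompetitive} gives $W(t)\le 2c\,|\A^{\prime}|\,W^{*}(t)$; since Algorithm~\ref{alg:DAlgo} uses at most $\lceil\log D\rceil+1$ distinct bins, $W(t)=O(\log D)\cdot W^{*}(t)$, and together with the extra factor $2$ from rounding densities to powers of $2$ and Observation~\ref{obs:LocalCompImpliesGlobal} this proves Theorem~\ref{thm:DCompetitive}.

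\textbf{Where the difficulty lies.} The only genuinely new work is the arrival analysis (properties~\ref{enu:GoodnessArrivalNondec} and \ref{enu:GoodnessArrivalInc}): the case analysis of Lemmas~\ref{lem:NondecreasingArrival-2} and \ref{lem:ImprovingArrival-1} was tailored to a two-valued contribution function and to weights that are powers of $2$, and it must be rebuilt for the continuous $\mu_i$ with general weights. In particular the convenient dichotomy ``$w(J^{\prime})=w(J_0)$ or $w(J^{\prime})\le w(J_0)/2$'' is no longer available and is replaced by one on weight \emph{classes}, which forces careful bookkeeping of the $2^{\lg w}$ shifts, of the position of the fresh job $J_0$ at the bottom of its class, and of the unique partially-processed job per class; extracting the absolute constant $c$ from that bookkeeping is the crux of the argument.
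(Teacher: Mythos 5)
Your contribution function is exactly the paper's $\mu_i'$, your verification of the axioms of Definition~\ref{def:ContributionFunction} is correct, your proof of goodness property~\ref{enu:GoodnessOrder} matches Proposition~\ref{rem:DensityDisjointRanges}, your ``at most one partially-processed job per weight class'' observation is Lemma~\ref{lem:ShortJobsDensity} (and the resulting $w(S)=O(2^y)$ bound is Corollary~\ref{cor:ShortJobsDensity}), and the final invocation of Lemma~\ref{lem:GoodBinsImplyCompetitive} with $\lceil\log D\rceil+1$ bins is exactly how the paper closes. The overall architecture is therefore right.

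The gap is that properties~\ref{enu:GoodnessArrivalNondec} and \ref{enu:GoodnessArrivalInc} of goodness --- which are the substance of Lemma~\ref{lem:DensityGood} and hence of the theorem --- are not proved; you say so yourself, and more importantly the direction you sketch for filling them in is not the one that works cleanly here. You propose to rebuild the $L/J_{\max}$ case split of Lemma~\ref{lem:NondecreasingArrival-2}, but that argument was tied to the two-valued $\mu_i$ and to power-of-$2$ weights: for instance, the collapse of $L$ to a singleton in Case~1 of that lemma hinges on $w(J_{\max})\ge 2w(J_0)$, which fails for arbitrary weights even across distinct weight classes (e.g.\ $w(J_0)=3$, $w(J_{\max})=4$). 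After the shift-invariance step you correctly identify (that $\gamma_{J'}(x,t^-)=\gamma_{J'}(x+w(J_0),t)$ for $J_0\prec J'$), the paper instead uses the interval representation $\gamma_J(x,t)=2^i\cdot\tau([0,x]\cap\lambda(J,t))$ and the pairwise disjointness of the intervals $\lambda(J',t)$ from Proposition~\ref{rem:DensityDisjointRanges}: the total loss over all $J'$ above $J_0$ becomes a competition for the single window $[x,x+w(J_0)]$ of length $w(J_0)$, part of which $J_0$ itself occupies, giving $\sum_{J'\in S}(\gamma_{J'}(x,t^-)-\gamma_{J'}(x,t))\le\gamma_{J_0}(x,t)$ in one algebraic step with no case split on $J_{\max}$ at all (Lemma~\ref{lem:DensityNondecreasingArrival}). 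Similarly Lemma~\ref{lem:DensityIncreasingArrival} avoids any $J_{\max}$ decomposition: it locates the unique $J^\star$ straddling $x+7w(J_0)$, handles the easy case where $J_0$ itself lands below $x+10w(J_0)$, and otherwise uses Corollary~\ref{cor:ShortJobsDensity} to exhibit $w(J_0)$-worth of full-density jobs in the window whose contributions jump from $0$ to their full volume; the constant comes out to $c=10$. Until you carry out an arrival analysis of this kind and extract a concrete constant, the theorem is unproved, and forcing the $O(\log P)$ case analysis onto the continuous $\mu_i'$ is likely to be substantially harder than the paper's interval argument.
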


For the purpose of viewing the volume covered by a bar, we add a dummy
job of each weight class to each bin. That is, a dummy job of weight
$2^{i}$ is added for any integer $i$. Each dummy job has higher
priority than (i.e. lies above) any other job of its weight class.
Note that for each real (non-dummy) job $J$, the total weight of
dummy jobs below $J$ sums to $2^{\lg(w(J))}$. The dummy jobs have
no volume, are never processed and are only for the purpose of modifying
the base of real jobs.

As in the processing time algorithm, we define the base of a job $J$
at time $t$ to be $\beta(J,t)=w(\{J^{\prime}\in A_{i}(t)\,|\,J^{\prime}\prec_{t}J\})$.

We also redefine the volume under a bar. Informally, a bar $x$ covers
exactly the volume that appears underneath $x$ in the visualization.
Formally:
\begin{defn}
The volume of job $J$ under bar $x$ is: 
\[
\gamma_{J}(x,t)=\begin{cases}
p_{t}(J) & x\ge\beta(J,t)+\frac{p_{t}(J)}{2^{i}}\\
2^{i}(x-\beta(J,t)) & \beta(J,t)\le x<\beta(J,t)+\frac{p_{t}(J)}{2^{i}}\\
0 & \text{otherwise}
\end{cases}
\]
\end{defn}

\begin{obs}\label{obs:HeightIsMinimalCoverDensity}

$\W_{A}^{\prime}(t)$ as defined before is the minimal bar that covers
the entire volume of a bin $A$.

 \end{obs}

\begin{figure}[!t]
\caption{\label{fig:Density-Bins}Density Bins}

\begin{raggedright}
The following image is a possible state of a bin in the algorithm
at some time $t$.
\par\end{raggedright}
\begin{centering}
\includegraphics{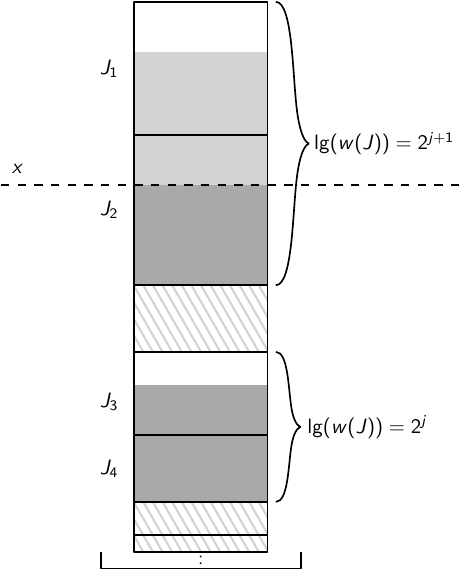}\\
\par\end{centering}
\begin{raggedright}
In this figure, the dummy jobs are dashed with gray lines. In the
base of the bin is the infinite base of dummy jobs, the height of
which sums to $2^{j}$. The $4$ real jobs, $J_{1}$ through $J_{4}$,
are stacked in the bin according to $\prec$, such that $J_{4}\prec J_{3}\prec J_{2}\prec J_{1}$.
Each real job is again represented as a container, in which the gray
area is the remaining processing time of the job. Its height is the
weight of the job, and its width is $2^{i}$ for the bin $A_{i}$.
Since in $A_{i}$ all jobs have density $2^{i}$, the area of the
container is exactly the initial volume of the job - that is, all
containers arrive full.
\par\end{raggedright}
\raggedright{}A bar $x$ is shown in the figure. The dark gray area
in the figure is the volume covered by the bar $x$, which changes
continuously with $x$ (compare this with covered volume in Section
\ref{sec:AnalysisPT}, which has discrete ``thresholds'').
\end{figure}

Consider any bin $A_{i}$. Every job $J$ assigned to $A_{i}$ has
$d(J)=2^{i}$. We prove the following lemma.
\begin{defn}
Define the following:
\end{defn}

\begin{itemize}
\item The length of an interval $[a,b]$, denoted by $\tau([a,b])=\begin{cases}
b-a & b\ge a\\
0 & \text{otherwise}
\end{cases}$.
\item For any job $J$ and any time $t$, define $\lambda(J,t)$ to be the
interval of height in which $J$ ``has volume''. Formally:
\[
\lambda(J,t)=[\beta(J,t),\beta(J,t)+\frac{p_{t}(J)}{2^{i}}]
\]
\end{itemize}
\begin{obs}

\label{obs:Interval-1}For any point in time $t$, for any job $J\in A_{i}(t)$,
and for any non-negative number $x$, 
\[
\gamma_{J}(x,t)=2^{i}\cdot\tau([0,x]\cap\lambda(J,t))
\]

\end{obs}

\begin{lem}
\label{lem:DensityNondecreasingArrival}If a job $J_{0}$ is released
at time $t$ and assigned to a bin $A_{i}$, and defining $t^{-}$
to be the time $t$ prior to the event of $J_{0}$'s release, then
for every non-negative $x$:
\[
B_{A}(x,t)\ge B_{A}(x,t^{-})
\]
\end{lem}

\begin{proof}
For jobs $J^{\prime}\in A(t^{-})$ such that $J^{\prime}\prec J_{0}$,
$\beta(J^{\prime},t^{-})=\beta(J^{\prime},t)$, and thus $\gamma_{J^{\prime}}(x,t)=\gamma_{J^{\prime}}(x,t^{-})$.
Now, consider $S$ the set of jobs so that $\forall J^{\prime}\in S:\,J_{0}\prec J^{\prime}$.
We have that $\beta(J^{\prime},t)=\beta(J^{\prime},t^{-})+w(J_{0})$
for all $J^{\prime}\in S$. Observe the following cases:

\textbf{Case 1}: $x<\beta(J_{0},t)$.

For any job $J^{\prime}\in S$, we have that 
\[
\beta(J^{\prime},t^{-})\ge\beta(J_{0},t)>x
\]

and thus $\gamma_{J^{\prime}}(x,t^{-})=0$, hence $x$ cannot cover
less volume of $J^{\prime}$ at time $t$. This completes the proof
for this case.

\textbf{Case 2}: $x\ge\beta(J_{0},t)$.

Considering any job $J^{\prime}\in S$, we have that $\beta(J^{\prime},t^{-})+w(J_{0})=\beta(J^{\prime},t)$.
Using Observation \ref{obs:Interval-1} we express the volume of $J^{\prime}$
covered at time $t^{-}$ using the state at time $t$:
\[
\gamma_{J^{\prime}}(x,t^{-})=\gamma_{J^{\prime}}(x+w(J_{0}),t)=2^{i}\cdot\tau([0,x+w(J_{0})]\cap\lambda(J^{\prime},t))
\]

This allows us to express $\Delta_{J^{\prime}}$, the loss of volume
under $x$ from $t^{-}$ to $t$:
\begin{multline*}
\Delta_{J^{\prime}}=\gamma_{J^{\prime}}(x,t^{-})-\gamma_{J^{\prime}}(x,t)=2^{i}\cdot\tau([0,x+w(J_{0})]\cap\lambda(J^{\prime},t))-\\
2^{i}\cdot\tau([0,x]\cap\lambda(J^{\prime},t))=\\
2^{i}\cdot\tau([x,x+w(J_{0})]\cap\lambda(J^{\prime},t))
\end{multline*}

The $\Delta_{J^{\prime}}$ for each $J^{\prime}\in S$ get disjoint
``portions'' of the interval $[x,x+w(J_{0})]$, since the $\lambda$-intervals
of different jobs do not overlap. They also can't get any portion
of the interval $[x,x+w(J_{0})]\cap\lambda(J_{0},t)$, which is occupied
by $J_{0}$. Therefore:

Therefore, we have:
\begin{align}
\sum_{J^{\prime}\in S}(\gamma_{J^{\prime}}(x,t^{-})-\gamma_{J^{\prime}}(x,t)) & \le2^{i}\cdot(\tau([x,x+w(J_{0})])-\tau([x,x+w(J_{0})]\cap\lambda(J_{0},t)))\nonumber \\
 & =2^{i}\cdot w(J_{0})-2^{i}\cdot\tau([0,x+w(J_{0})]\cap\lambda(J_{0},t))+2^{i}\cdot\tau([0,x]\cap\lambda(J_{0},t))\nonumber \\
 & =p(J_{0})-\gamma_{J_{0}}(x+w(J_{0}),t)+\gamma_{J_{0}}(x,t)\nonumber \\
 & \underbrace{=}_{(*)}p(J_{0})-p(J_{0})+\gamma_{J_{0}}(x,t)=\gamma_{J_{0}}(x,t)\label{eq:DNotMuchLoss}
\end{align}

Equality $(*)$ is due to $x\ge\beta(J_{0},t)$. Using this, we have
that 
\begin{eqnarray*}
B_{A}(x,t) & = & B_{A}(x,t^{-})+\gamma_{J_{0}}(x,t)+\sum_{J^{\prime}\in A(t^{-})}(\gamma_{J^{\prime}}(x,t)-\gamma_{J^{\prime}}(x,t^{-}))\\
 & = & B_{A}(x,t^{-})+\gamma_{J_{0}}(x,t)+\sum_{J^{\prime}\in S}(\gamma_{J^{\prime}}(x,t)-\gamma_{J^{\prime}}(x,t^{-}))\\
 & \ge & B_{A}(x,t^{-})+\gamma_{J_{0}}(x,t)-\gamma_{J_{0}}(x,t)=B_{A}(x,t^{-})
\end{eqnarray*}

Where the inequality uses equation \ref{eq:DNotMuchLoss}. 
\end{proof}
\begin{prop}
\label{prop:ShortJobsDensity}At every time $t$, for every $i,j$,
there exists at most one partially processed job $J\in A(t)$ such
that $\lg(w(J))=j$.
\end{prop}

\begin{proof}
Once there is a single partially processed job $J$ in a specific
weight class, the algorithm will not work on another job of that class
until $J$ is complete. Thus there cannot be more than one partially
processed job in a weight class. 
\end{proof}

\begin{cor}
\label{cor:ShortJobsDensity}At any point in time $t$ and for $j\in\Z$,
let $S\subseteq A(t)$ be the set of partially-processed jobs $J$
such that $\lg(w(J))\le j$. Then $w(S)<4\cdot2^{j}$.
\end{cor}

\begin{proof}
For every $k\in\Z$ such that $k\le j$ we have through Proposition
\ref{prop:ShortJobsDensity} that there exists at most one partially
processed job $J_{k}\in A(t)$ such that $\lg(w(J_{k}))=k$. The weight
of $J_{k}$ is at most $2^{k+1}$. Defining $m=\min_{J\in S}\lg(w(J))$,
we have that 
\[
w(S)\le\sum_{k=m}^{j}2^{k+1}<2^{j+2}\le4\cdot2^{j}
\]
\end{proof}
\begin{lem}
\label{lem:DensityIncreasingArrival}If a job $J_{0}$ is released
at time $t$ and assigned to bin $A_{i}$, and defining $t^{-}$ to
be the time $t$ prior to the event of $J_{0}$'s release, then for
every non-negative $x$ we have 
\[
B_{A}(x+10\cdot w(J_{0}),t)\ge B_{A}(x,t^{-})+p(J_{0})
\]
\end{lem}

\begin{proof}
Note that for any job $J^{\prime}\in A(t^{-})$, the base of $J^{\prime}$
can rise by at most $w(J_{0})$ upon the arrival of $J_{0}$. Since
the bar $x$ is raised by $10\cdot w(J_{0})$ (more than $w(J_{0})$),
the new bar must cover at least as much volume of $J^{\prime}$ at
$t$ as the old bar did at $t^{-}$. To complete the proof it is thus
enough to find a subset $S\subseteq A(t)$ such that $x$ covers at
least $p(J_{0})$ more of $S$'s volume at $t$ than at $t^{-}$.

Now, observe the position of $J_{0}$ at time $t$. If $J_{0}$ is
fully covered by $x+10\cdot w(J_{0})$, then $x+10\cdot w(J_{0})$
covers $J_{0}$'s entire volume, and choosing $S=\{J_{0}\}$ completes
the proof.

Otherwise, $\beta(J_{0},t)+w(J_{0})>x+10\cdot w(J_{0})$. Choose $S$
to be the set of jobs that start and end within $[x,x+10\cdot w(J_{0})]$
\textendash{} that is, for every $J\in S$ we have $\beta(J,t)\ge x$
and $\beta(J,t)+w(J)\le x+10\cdot w(J_{0})$.

We now claim that $w(S)\ge6\cdot w(J_{0})$. This is since the job
immediately above $S$ and the job immediately below $S$ can occupy
some of the interval $[x,x+10\cdot w(J_{0})]$, but the rest is taken
by $S$. The jobs immediately below and above $S$ must each have
weight at most $2w(J_{0})$, otherwise they would be above $J_{0}$,
in contradiction (recall that $J_{0}$ is above $S$).

Denote by $S^{\prime}\subseteq S$ the subset of real jobs in $S$.
The weight of dummy jobs below $J_{0}$ sums to $2^{\lg(w(J_{0}))}\le w(J_{0})$,
thus $w(S^{\prime})\ge5w(J_{0})$. 

Denote by $S^{\prime\prime}\subseteq S^{\prime}$ the subset of unprocessed
jobs in $S^{\prime}$. Through corollary \ref{cor:ShortJobsDensity},
the total weight of partially processed jobs in $S$ is at most $4\cdot2^{\lg(w(J_{0}))}\le4w(J_{0})$,
and thus $w(S^{\prime\prime})\ge w(J_{0})$. 

All jobs in $S^{\prime\prime}$ are real and unprocessed, giving $S^{\prime\prime}$
a total volume of at least $2^{i}\cdot w(J_{0})=p(J_{0})$. Since
$J_{0}$ is above all jobs of $S^{\prime\prime}$, they did not move
upon the arrival of $J_{0}$. Since they all lie in the interval $[x,x+10\cdot w(J_{0})]$,
all of their volume is covered by $x+10\cdot w(J_{0})$ at time $t$,
and none of their volume is covered by $x$ at $t^{-}$. This completes
the proof.
\end{proof}
With Lemmas \ref{lem:DensityNondecreasingArrival} and \ref{lem:DensityIncreasingArrival}
in hand, we can now repeat the process used in Section \ref{sec:AnalysisPT}
to show competitiveness, with slight changes to the relevant constants. 
\begin{prop}
\label{prop:DExecutionMeansEntireVolume}(analogue of Proposition
\ref{prop:ExecutionMeansWholeVolume}) At a time $t$ and a non-negative
$x$, if $B(x,t)$ decreases as a result of the algorithm's processing,
then $B(x,t)=V(t)$.
\end{prop}

\begin{proof}
Observe that $\W_{A}^{\prime}(t)$ is exactly the minimal bar that
covers the entire volume of the bin $A$ at time $t$. The remainder
of the proof is identical to \ref{prop:ExecutionMeansWholeVolume}.
\end{proof}
\begin{lem}
\label{lem:DTrackingLemma}(analogue of Lemma \ref{lem:TrackingLemma-2})
At every time $t$, a bar at $10$ times the weight of the optimum
covers at least the volume of the optimum. Formally:
\[
V^{\ast}(t)\le B(10\cdot w(Q^{\ast}(t)),t)
\]
\end{lem}

\begin{proof}
The proof is nearly identical to that of \ref{lem:TrackingLemma-2},
using Proposition \ref{prop:DExecutionMeansEntireVolume} and Lemmas
\ref{lem:DensityNondecreasingArrival} and \ref{lem:DensityIncreasingArrival}.
The change of constant from $3$ to $10$ results from swapping Lemma
\ref{lem:DensityIncreasingArrival} for Lemma \ref{lem:ImprovingArrival-1}.
\end{proof}
\begin{lem}
\label{lem:DGoodBinsImplyCompetitive}(analogue of Lemma \ref{lem:GoodBinsImplyCompetitive})
Denoting by $\A^{\prime}\subset\A$ the set of non-empty bins at a
time $t$, we have: 
\[
W(t)\le20\cdot|\A^{\prime}|\cdot W^{*}(t)
\]
\end{lem}

\begin{proof}
Similar proof to Lemma \ref{lem:GoodBinsImplyCompetitive}, using
Lemma \ref{lem:DTrackingLemma}. It is important to note that in the
proof of Lemma \ref{lem:GoodBinsImplyCompetitive}, we used the fact
that at any time $t$, if a bar $x$ covers the entire volume of a
bin, then $2x$ is at least the entire weight of the jobs in that
bin. This holds in our case as well, when considering the total weight
of non-dummy jobs in the bin.

To observe this, consider the bin without dummy jobs, and denote by
$\beta^{\prime}(J,t)$ the new base of a real job $J$. Consider any
such $J$. For $J$ to have volume covered under $x$ in the original
bin, we must have that $x\ge\beta(J,t)$. Since $\beta(J,t)=\beta^{\prime}(J,t)+2^{\lg(w(J))}$,
we have that $x\ge\beta^{\prime}(J,t)+2^{\lg(w(J))}\ge\beta^{\prime}(J,t)+\frac{w(J)}{2}$.
The bar $2x$ therefore covers $J$ completely. 
\end{proof}
We can now prove Theorem \ref{thm:DCompetitive}.
\begin{proof}
(of Theorem \ref{thm:DCompetitive}) Since the algorithm assigns jobs
to at most $\lceil\log D\rceil+1$ bins, Lemma \ref{lem:DGoodBinsImplyCompetitive}
gives us: 
\[
W(t)\le20(\lceil\log D\rceil+1)\cdot W^{*}(t)
\]

Using Observation \ref{obs:LocalCompImpliesGlobal}, we have that
the algorithm is $O(\log D)$-competitive.
\end{proof}

\section{\label{sec:MinAlgo}The $O(\log(\min(W,P,D)))$-Competitive Algorithm}

In this section we describe an algorithm which is $O(\min(\log(W,P,D)))$-competitive
without knowing $W,P,D$ in advance. As in Sections \ref{sec:ProcessingTimeAlgo}
and \ref{sec:DensityAlgo}, the algorithm is composed of a bin-assignment
part and a bin-processing part.

The main idea in the algorithm is combining bins for processing time,
bins for density and bins for weight. In this algorithm, all the bins
start closed, and must be opened prior to being assigned any jobs
by the algorithm. The algorithm only opens bins as triplets with a
bin of each type; this property keeps the number of bins logarithmic
in the minimal of $W,P,D$.

Define the disjoint sets of bins $\A_{1}=\{A_{i}\,|\,i\in\Z\}$ (processing
time bins), $\A_{2}=\{A_{i}^{\prime}\,|\,i\in\Z\}$ (density bins)
and $\A_{3}=\{A_{i}^{\text{\ensuremath{\prime\prime}}}\,|\,i\in\Z\}$
(weight bins). Our set of bins is $\A=\A_{1}\cup\A_{2}\cup\A_{3}$. 

Inside the processing-time bins and density bins, the jobs are ordered
as in Sections \ref{sec:ProcessingTimeAlgo} and \ref{sec:DensityAlgo}
respectively. Inside weight bins, the jobs are ordered according to
remaining processing time (lower processing time is higher).

Since the algorithm uses various ways of rounding the weights of jobs,
we cannot make the assumption that the weights are rounded a-priori.
Therefore the rounding of the weights is a part of the algorithm.

Let $\W_{A}$ be defined for $A\in\A_{1}$ as in Section \ref{sec:ProcessingTimeAlgo},
and let $\W_{A}^{\text{\ensuremath{\prime}}}$ be defined for $A\in\A_{2}$
as in Section \ref{sec:DensityAlgo}.
\begin{defn}
For a bin $A\in\A$ and a time $t$, we define the score of a bin
$A$ to be:

\[
\tilde{\W}_{A}(t)=\begin{cases}
\W_{A}(t) & A\in\A_{1}\\
\W_{A}^{\prime}(t) & A\in\A_{2}\\
w(A(t)) & A\in\A_{3}
\end{cases}
\]

\end{defn}

As in previous sections, we denote by $\ttop_{A}(t)$ the maximal
job in $A$ at time $t$ with respect to the ordering in $A$.

Algorithm \ref{alg:BinAssignerMin} as described below is $O(\min(\log(W,P,D)))$
competitive.

\SetAlgoNoEnd

\SetAlgoNoLine

\begin{algorithm}
\caption{\label{alg:BinAssignerMin}$O(\log(\min(P,D,W)))$ Competitive}

When a new job $J$ arrives:\\
\Indp
\If {$A_{i}$ such that $2^i<p(J)\le 2^{i+1}$ is open} 
{
assign $J$ to $A_{i}$\\
round the weight of $J$ up to $2^{\lg(w(J))+1}$
}
\ElseIf {$A^{\prime}_{\lg (d(J))}$ is open} 
{
assign $J$ to $A^{\prime}_{\lg (d(J))}$\\
round the weight of $J$ up to give $J$ the new density $2^{\lg (d(J))}$
}
\ElseIf {$A^{\prime \prime}_{\lg(w(J))+1}$ is open} 
{
assign $J$ to $A^{\prime \prime}_{\lg(w(J))+1}$\\
round the weight of $J$ up to $2^{\lg(w(J))+1}$
}
\Else 
{
open $A_{i}$, $A^{\prime}_{\lg (d(J))}$ and $A^{\prime \prime}_{\lg(w(J))+1}$\\
assign $J$ to $A^{\prime \prime}_{\lg(w(J))+1}$\\
round the weight of $J$ up to $2^{\lg(w(J))+1}$
}
\Indm
At any time $t$:\\
\Indp
For $A=\arg \max _A(\tilde{\mathscr{W}}_A(t))$, process $\text{top}_A(t)$
\end{algorithm}

\section{\label{sec:MinAnalysis}Analysis of $O(\log(\min(W,P,D)))$-Competitive
Algorithm}

We want to prove the following theorem.
\begin{thm}
\label{thm:MinCompetitive}The algorithm described in Section \ref{sec:MinAlgo}
is $O(\log(\min(W,P,D)))$-competitive.
\end{thm}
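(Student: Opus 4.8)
The plan is to reduce Theorem~\ref{thm:MinCompetitive} to the machinery of subsection~\ref{subsec:BinGoodnessAndCompetitiveness}. Concretely, I would equip every bin with a contribution function, verify that under these functions Algorithm~\ref{alg:BinAssignerMin} is exactly the ``process $M_A(t)$ for the bin $A$ of maximal $\hat{\W}_A(t)$'' rule, show that every bin is $c$-good for one uniform constant $c$, and bound the number of non-empty bins at any time by $O(\log\min(P,D,W))$. Given these, Lemma~\ref{lem:GoodBinsImplyCompetitive} gives $W(t)\le 2c\cdot|\A^{\prime}|\cdot W^{*}(t)=O(\log\min(P,D,W))\cdot W^{*}(t)$, and Observation~\ref{obs:LocalCompImpliesGlobal} concludes.

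For the contribution functions, I would keep $\mu_{A_i}=\mu_i$ on $\A_1$ and $\mu_{A_i^{\prime}}=\mu_i^{\prime}$ on $\A_2$ (so that, as already shown, $\hat{\W}_A=\W_A$ and $\hat{\W}_A=\W_A^{\prime}$ respectively), and on a weight bin $A_k^{\prime\prime}\in\A_3$ define $\mu^{\prime\prime}(x,w,p,h)=p$ if $x\ge h+w$ and $0$ otherwise. It is routine to check that $\mu^{\prime\prime}$ meets all six requirements of Definition~\ref{def:ContributionFunction}, and that with this choice $\hat{\W}_{A_k^{\prime\prime}}(t)=\beta(M_{A_k^{\prime\prime}}(t),t)+w(M_{A_k^{\prime\prime}}(t))=w(A_k^{\prime\prime}(t))=\tilde{\W}_{A_k^{\prime\prime}}(t)$, so the combined algorithm indeed selects the bin of maximal $\hat{\W}_A(t)$ and the whole of subsection~\ref{subsec:BinGoodnessAndCompetitiveness} applies.

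Next I would establish $c$-goodness. The assignment rule rounds weights to powers of $2$ for jobs sent to $\A_1$ or $\A_3$ bins, rounds weights so that the density becomes a power of $2$ for jobs sent to $\A_2$ bins, sends a job to $A_i\in\A_1$ only when $2^i\le p(J)\le 2^{i+1}$, and sends a job to $A_k^{\prime\prime}\in\A_3$ only when its rounded weight is exactly $2^k$; moreover within a bin the algorithm always processes the top job, so all the auxiliary facts (ordering invariance, ``at most one short job per weight class'') still hold verbatim. Hence Lemma~\ref{lem:ProcessingGoodness} makes every $\A_1$ bin $3$-good and Lemma~\ref{lem:DensityGood} makes every $\A_2$ bin $10$-good. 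For the weight bins I would prove directly that $A_k^{\prime\prime}$ is $2$-good: property~\ref{enu:GoodnessOrder} follows from $\beta(J_2,t)\ge\beta(J_1,t)+w(J_1)$ when $J_1\prec J_2$; for properties~\ref{enu:GoodnessArrivalNondec} and~\ref{enu:GoodnessArrivalInc} the key points are that every job in $A_k^{\prime\prime}$ has weight $2^k$, so the $\beta$-values of the jobs present are exactly $0,2^k,2\cdot2^k,\dots$ and $J^{\prime}\prec_t J_0$ forces $p_t(J^{\prime})\ge p_t(J_0)$. Thus when $J_0$ arrives, at most one incumbent job -- the one sitting one weight-slot above the bar -- loses contribution, and it loses at most $p(J_0)$, which $J_0$ itself restores; and raising the bar by $2w(J_0)$ either makes $J_0$ contribute all of $p(J_0)$ or, when $J_0$ sits high, exposes a job below $J_0$ of remaining volume $\ge p(J_0)$ that becomes fully counted. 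Since a $c$-good bin is $c'$-good for all $c'\ge c$ by monotonicity (property~\ref{enu:Contribution-NonDecreasing}), every bin is $10$-good. This is the step I expect to take the most care, as it amounts to re-deriving the Bansal--Dhamdhere guarantee inside the contribution-function language.

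Finally, to count bins, call a job an \emph{opener} if on arrival it falls through to the last branch of Algorithm~\ref{alg:BinAssignerMin} and opens $A_{\lg p(J)}$, $A_{\lg d(J)}^{\prime}$ and $A_{\lg w(J)+1}^{\prime\prime}$. Each of those three bins was closed immediately before and, since bins are never closed, had never been opened; consequently two distinct openers must differ in $\lg p(\cdot)$ (otherwise the later one would find its $\A_1$ bin already open and not fall through), and likewise in $\lg d(\cdot)$ and in $\lg w(\cdot)$. As these three quantities range over at most $\lceil\log P\rceil+1$, $\lceil\log D\rceil+1$ and $\lceil\log W\rceil+2$ values respectively, the number of openers is at most $\min(\lceil\log P\rceil,\lceil\log D\rceil,\lceil\log W\rceil)+O(1)$, so at most $3\bigl(\min(\lceil\log P\rceil,\lceil\log D\rceil,\lceil\log W\rceil)+O(1)\bigr)$ bins are ever opened, which bounds $|\A^{\prime}|$ for every $t$. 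Plugging $c=10$ and this bound into Lemma~\ref{lem:GoodBinsImplyCompetitive} and invoking Observation~\ref{obs:LocalCompImpliesGlobal} gives $O(\log\min(P,D,W))$-competitiveness.
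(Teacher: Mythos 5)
Your proposal is essentially the paper's own proof: you introduce $\mu^{\prime\prime}$ for the weight bins, observe that $\hat{\W}_A=\tilde{\W}_A$ on all three bin families, invoke Lemmas~\ref{lem:ProcessingGoodness} and~\ref{lem:DensityGood} for $\A_1,\A_2$, establish goodness of $\A_3$ bins (the paper claims $1$-good by citing \cite{DBLP:journals/talg/BansalD07}, whereas you re-derive it directly with constant $2$ and then relax to $10$; both are fine and your direct argument is actually the more self-contained route), bound the number of opened bins via the triplet-opening argument, and finish with Lemma~\ref{lem:GoodBinsImplyCompetitive} and Observation~\ref{obs:LocalCompImpliesGlobal}. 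The one step you skip, which the paper treats explicitly, is the last translation from the weight-rounded instance $\J^{\prime}$ back to the original $\J$: Lemma~\ref{lem:GoodBinsImplyCompetitive} and Observation~\ref{obs:LocalCompImpliesGlobal} compare the algorithm against the optimum on the \emph{rounded} instance, and since here the rounding happens inside the algorithm rather than a priori, one must still argue that the algorithm's cost on $\J$ is at most its cost on $\J^{\prime}$ and that $\mathrm{OPT}(\J^{\prime})\le 2\,\mathrm{OPT}(\J)$, losing only a constant factor. This is routine (the same observation is made once in Section~\ref{sec:ProcessingTimeAlgo}), but it should appear in a complete proof.
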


Note that since the weight rounding done by the algorithm is by a
factor of at most $2$, we henceforth consider competitiveness compared
to the optimum for the rounded instance.

In order to do so, we define the volume under a bar $x$ differently
for each type of bin. For a processing-time bin, it is as defined
in Section \ref{sec:AnalysisPT}. For a density bin, it is as defined
in Section \ref{sec:DensityAnalysis}. For a weight bin, it is defined
as the total volume of jobs that $x$ covers completely (as defined
in the $O(\log W)$-competitive algorithm of \cite{DBLP:journals/talg/BansalD07})

Note that from the definition of volume under bar in a weight bin
$A$, we have that $w(A(t))$ is the minimal bar which covers the
entire volume of $A$. Also note that if $x$ covers the entire volume
in a weight bin, it also covers the entire weight of that bin.

The proof of the following lemma is given in \cite{DBLP:journals/talg/BansalD07}:
\begin{lem}
For a bar $x$ and a job $J_{0}$ that arrives at a weight bin $A_{i}^{\prime\prime}$
at time $t$, then:
\begin{enumerate}
\item $B_{A_{i}^{\prime\prime}}(x,t)\ge B_{A_{i}^{\prime\prime}}(x,t^{-})$
\item $B_{A_{i}^{\prime\prime}}(x+w(J),t)\ge B_{A_{i}^{\prime\prime}}(x,t^{-})+p(J_{0})$
\end{enumerate}
\end{lem}

We can now prove Theorem \ref{thm:MinCompetitive}.
\begin{proof}
(of Theorem \ref{thm:MinCompetitive}) 

The proof is again analoguous to that in Section \ref{sec:AnalysisPT}.
The score for each bin is exactly the minimal bar that covers all
volume in that bin, enabling us to repeat Proposition \ref{prop:ExecutionMeansWholeVolume}. 

Note that the algorithm assigns jobs to bins such that:
\begin{itemize}
\item Whenever a job $J$ is assigned to a bin $A_{i}\in\A_{1}$ we have
that $w(J)=2^{j}$ for some $j\in\Z$ and $2^{i}\le p(J)\le2^{i+1}$. 
\item Whenever a job $J$ is assigned to a bin $A_{i}^{\prime}\in\A_{2}$
we have that $d(J)=2^{i}$. 
\item Whenever a job $J$ is assigned to a bin $A_{i}^{\prime\prime}\in\A_{3}$,
we have that $w(J)=2^{i}$. 
\end{itemize}
Under these conditions, we know that upon an arrival of a job $J$
to any bin $A$ at a time $t$ we have:
\begin{itemize}
\item $B_{A}(x,t)\ge B_{A}(x,t^{-})$
\item $B_{A}(x+10w(J),t)\ge B_{A}(x,t^{-})+p(J)$
\end{itemize}
Using these last two bullets, as well as the analogue for Proposition
\ref{prop:ExecutionMeansWholeVolume}, we obtain an analogue for Lemma
\ref{lem:TrackingLemma-2} which states that $V^{\ast}(t)\le B(10\cdot w(Q^{\ast}(t)),t)$
at any time $t$.

We also know that if $x$ covers all volume in the algorithm, $2x$
covers the entire weight in the algorithm. We thus obtain an analogue
of Lemma \ref{lem:GoodBinsImplyCompetitive} that yields $W(t)\le20\cdot(\#\text{open bins})\cdot W^{\ast}(t)$
at any time $t$.

The algorithm can only open $3\cdot(\lceil\log(\min(W,P,D))\rceil+1)$
bins. This is since it only opens a triplet of bins upon the arrival
of a job which does not fit in any existing bin. For example, if $W=\min(W,P,D)$,
after opening $\lceil\log W\rceil+1$ triplets the entire weight range
is covered, and no more triplets will be opened. The same argument
holds for $P$ and $D$.

Therefore we have:
\[
W(t)\le2\cdot10\cdot3\cdot(\lceil\log(\min(W,P,D))\rceil+1)\cdot W^{*}(t)=60\cdot(\lceil\log(\min(W,P,D))\rceil+1)\cdot W^{*}(t)
\]

Using Observation \ref{obs:LocalCompImpliesGlobal}, we have that
the algorithm is $O(\log(\min(W,P,D)))$-competitive.
\end{proof}
\bibliographystyle{plain}
\bibliography{bibFile}

\end{document}